\newcommand{\REMOVE}[1]{}
\newcommand{\ef}{\frac{1}{\epsilon}}
\newtheorem{theorem}{Theorem}[section]
\newtheorem{lemma}[theorem]{Lemma}
\newtheorem{corollary}[theorem]{Corollary}
\theoremstyle{definition}
\newtheorem{definition}[theorem]{Definition}
\newcounter{this-list}
\newcounter{par-list}
\newlength{\parlistlength}
\begin{document}


\title{Efficient Computation of Distance Sketches in Distributed Networks}
\author{Atish Das Sarma\\Google Research\\Mountain View, CA
\and Michael Dinitz\\Weizmann Institute of Science\\Rehovot, Israel
\and
Gopal Pandurangan\\Nanyang Technological University\\Singapore}

\maketitle

\begin{abstract}
Distance computation (e.g., computing shortest paths) is one of the most fundamental primitives used in communication networks. The cost of effectively and accurately computing pairwise network distances can become prohibitive in large-scale networks such as the Internet and Peer-to-Peer (P2P) networks. To negotiate the rising need for very efficient distance computation at scales never imagined before, approximation techniques for numerous variants of this question have recently received significant attention in the literature. Several different areas of theoretical research have emerged centered around this problem, such as metric embeddings, distance labelings, spanners, and distance oracles. The goal is to preprocess the graph and store a {\em small} amount of information such that whenever a query for any pairwise distance is issued, the distance can be well approximated  (i.e., with small {\em stretch}) very quickly in an online fashion. Specifically, the pre-processing (usually) involves storing a small {\em sketch} with each node, such that at query time only the sketches of the concerned nodes need to be looked up to compute the approximate distance.

Techniques derived from metric embeddings have been considered extensively by the networking community, usually under the name of \emph{network coordinate systems}.  On the other hand, while the computation of distance oracles has received considerable attention in the context of web graphs and social networks, there has been little work towards similar algorithms within the  networking community.
In this paper, we present the first theoretical study of distance sketches derived from distance oracles in a distributed network. We first present a fast distributed algorithm for computing approximate distance sketches, based on a distributed implementation of the distance oracle scheme of  [Thorup-Zwick, JACM 2005].  We also show how to modify this basic construction to achieve different tradeoffs between the number of pairs for which the distance estimate is accurate, the size of the sketches, and the time and message complexity necessary to compute them.  These tradeoffs can then be combined to give an efficient construction of small sketches with provable average-case as well as worst-case performance.  Our algorithms use only small-sized messages and hence are suitable for bandwidth-constrained networks, and can be used in various networking applications such as topology discovery and construction, token management, load balancing, monitoring overlays, and several other problems in distributed algorithms.
\end{abstract}

\section{Introduction}

A fundamental operation on large networks is finding shortest paths between pairs of nodes, or at least finding the lengths of these shortest paths. This problem is not only a common building block in many algorithms, but is also a meaningful operation in its own right.
In a distributed network such as a large peer to peer network, this may be useful in search, topology discovery, overlay creation, and basic node to node communication.  However, given the large size of these networks, computing shortest path distances can, if done naively, require a significant amount of both time and network resources.
As we would like to make distance queries in real time with minimal latency, it becomes important to use small amounts of resources per distance query.

One approach to handle online distance requests is to perform a one-time offline or centralized computation. A straightforward brute force solution would be to compute the shortest paths between all pairs of nodes offline and to store the distances locally in the nodes.  Once this has been accomplished, answering a shortest-path query online can be done with no communication overhead; however, the local space requirement is quadratic in the number of nodes in the graph (or linear if only shortest paths from the node are stored).   For a large network containing millions of nodes, this is simply infeasible.
An alternative, more practical approach is to store some auxiliary information with each node that can facilitate a quick distance computation online in real time. This auxiliary information is then used in the online computation that is performed for every request or query. One can view this auxiliary information as a {\em sketch} of the neighborhood structure of a node that is stored with each node. Simply retrieving the sketches of the two nodes should be sufficient to estimate the distance between them. Three properties are crucial for this purpose: first, these sketches should be reasonably small in size so that they can be stored with each node and accessed for any node at run time. Second, there needs to be a simple algorithm that, given the sketches of two nodes, can estimate the distance between them quickly. And third, even though the computation of the sketches is an offline computation,
this cost also needs to be accounted for (as the distance information or network itself changes frequently, and this would require altering the sketches periodically).



Sketches for the specific purpose of distance computation in communication networks have been referred to as \emph{distance labelings} (by the more theoretical literature) and as \emph{network coordinate systems} (by the more applied literature). There has been a significant amount of work from a more theoretical point of view on the fundamental tradeoff between the size of the sketches and the accuracy of the distance estimates they give (see e.g.~Thorup and Zwick~\cite{TZ05}, Gavoille et al.~\cite{GPPR04}, Katz et al.~\cite{KKKP04}, and Cohen et al.~\cite{CFIKP09}).  However, all of these papers assumed a \emph{centralized} computation of sketches, so are of limited utility in real distributed systems.  In the networking community there has also been much work on constructing good network coordinate systems, including seminal work such as the Vivaldi system~\cite{Vivaldi} and the Meridian system~\cite{WSS05}.  While this line of work has resulted in almost fully functioning systems with efficient distributed algorithms, the theoretical underpinning of such systems is lacking; most of them can easily be shown to exhibit poor behavior in pathological instances.  The main exception to this is Meridian, which has a significant theoretical component.  However, it assumes that the underlying metric space is ``low-dimensional", and it is easy to construct high-dimensional instances on which Meridian does poorly.

We attempt to move the theoretical line of research slightly closer to practice by designing efficient distributed algorithms for computing accurate distance sketches.  We give algorithms with bounded round and message complexity in a standard model of distributed computation (the CONGEST model~\cite{peleg}) that compute sketches that give distances estimates provably close to the actual distances.  In particular, we engineer a distributed version of the seminal centralized algorithm of Thorup and Zwick \cite{TZ05}. Their algorithm computes sketches that allow us to approximate distances within a factor $2k-1$ (this value is known as the \emph{stretch}) by using sketches of size $\tilde O(k \cdot n^{1/k})$, for any integer $k \geq 1$.  Up to the factor of $k$ in the size, this is known to be a tight tradeoff in the worst case assuming a famous conjecture by Erd\H{o}s~\cite{TZ05}.  Note that this achieves its minimum size at $k=\log n$, giving an $O(\log n)$-factor approximation to the distances using sketches of size $O(\log^2 n)$. We further extend these results to give a distributed algorithm based on the centralized algorithms of Chan et al.~\cite{CDG06} that computes sketches with the same worst-case stretch and almost the same size, but with provably better \emph{average} stretch.  To the best of our knowledge, this is the first theoretical analysis of distributed algorithms for computing distance sketches.  Our work can also be viewed as an efficient computation of local node-centric views of the global topology, which may be of independent interest for numerous different applications (cf. Section \ref{sec:applications}).


\subsection{Our Contributions}

Our main contributions are new distributed algorithms for various types of distance sketches.  Most of the actual sketches have been described in previous work~\cite{TZ05,CDG06}, but we are the first to show that they can be efficiently constructed in a distributed network.  While we formally define the model and problem in Section~\ref{sec:model}, at a high level we assume a synchronous distributed network in which in a single communication ``round"  every node can send a message of up to $O(\log n)$ bits (or $1$ word) to each of its neighbors.  Every edge has a nonnegative weight associated with it, and the distance between two nodes is the total weight of the shortest path (with respect to weights) between them.  We begin in Section~\ref{sec:main} by giving a distributed algorithm that constructs Thorup-Zwick distance sketches~\cite{TZ05} efficiently:

\begin{theorem} \label{thm:TZ_intro}
For any $k \geq 1$, there is a distributed algorithm that takes $O(k n^{1/k}S \log n)$ rounds and $O(k n^{1/k} S |E| \log n)$ messages, after which with high probability every node has a sketch of size at most $O(kn^{1/k}\log n)$ words that provides approximate distances up to a factor of $2k-1$.
\end{theorem}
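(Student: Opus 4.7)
The plan is to implement the Thorup--Zwick construction in the distributed \textsf{CONGEST} setting, with the bulk of the technical work going into scheduling the many concurrent shortest-path explorations so that no edge is forced to carry more than $O(\log n)$ bits per round. The construction has three conceptual stages: (i) sample the hierarchy $V = A_0 \supseteq A_1 \supseteq \cdots \supseteq A_{k-1} \supseteq A_k = \emptyset$, where every $w \in A_{i-1}$ is kept in $A_i$ independently with probability $n^{-1/k}$; (ii) compute, for each node $v$ and each level $i$, the pivot $p_i(v)\in A_i$ achieving $\delta(A_i,v) = d(v,A_i)$; (iii) assemble the bunch $B(v) = \bigcup_{i=0}^{k-1}\{w \in A_i\setminus A_{i+1}:\, d(v,w) < \delta(A_{i+1},v)\}$ together with the stored distances $d(v,w)$ for $w \in B(v)$. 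Stage (i) requires no communication at all, since each node flips $k-1$ biased coins to determine its maximum level. For stage (ii) I would process the levels one at a time, running a multi-source source-detection variant of Bellman--Ford rooted at $A_i$: each node keeps and repeatedly forwards only its current best $(p_i(v),\delta(A_i,v))$ pair, which fits in $O(\log n)$ bits under the standard polynomially-bounded-weights assumption. Since only the closest pivot is propagated, bandwidth for one level is $O(1)$ word per edge per round, and all $k$ levels cost $O(kS)$ rounds and $O(kS|E|)$ messages.

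For stage (iii) the centralized recipe is to run, from each $w \in A_i \setminus A_{i+1}$, a truncated Dijkstra that only expands into nodes $v$ with $d(v,w)<\delta(A_{i+1},v)$. I would distribute this by having every such $w$ launch a Bellman--Ford-like wave in which a node $v$, upon receiving a tentative distance estimate for $w$, forwards it to its neighbors only if (a) it strictly improves $v$'s current estimate and (b) it is still strictly smaller than $\delta(A_{i+1},v)$, a value that $v$ already computed in stage (ii). The correctness of the truncation follows from the standard observation that if $w\in B(v)$ then every prefix of a shortest $w$--$v$ path itself satisfies the bunch-inclusion condition at its endpoint, so no relevant propagation is ever pruned. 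After $O(S)$ relaxation rounds a wave completes, and $v$ ends up knowing $d(v,w)$ for exactly the $w \in B(v)$.

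The main obstacle, and the reason a naive concurrent execution is insufficient, is bandwidth: in principle a node $u$ could be required to forward updates for arbitrarily many sources $w$ in a single round. The saving observation is that the \emph{only} sources $w$ for which $u$ ever needs to forward an update are those in $B(u)$ itself, since the truncation rule kicks in otherwise. A standard Chernoff calculation over the sampling of the $A_i$'s yields $|B(u)| = O(k n^{1/k}\log n)$ for every $u$ with high probability, so the total volume of traffic on any edge throughout the entire execution of stage (iii) is bounded by $O(k n^{1/k}\log n)$ words. I would then schedule all the concurrent waves with a simple first-in-first-out pipeline of depth $O(k n^{1/k}\log n)$: on each edge $u$ transmits at most one pending $(w,d(u,w))$ pair per round, and each wave experiences queueing delay at most equal to the local pipeline depth. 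Summing propagation time $O(S)$ and congestion $O(k n^{1/k}\log n)$ yields the claimed $O(k n^{1/k} S \log n)$ round bound, and multiplying by $|E|$ gives the message bound. The stretch $2k-1$ and sketch size $O(k n^{1/k}\log n)$ words are then inherited directly from the Thorup--Zwick analysis, because what we have produced is precisely their bunches and pivots.

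The hard part will be making the pipelining argument fully rigorous: one must bound not merely the eventual size of each bunch but the instantaneous queue occupancy at every node during the execution, and show that FIFO scheduling guarantees that the end-to-end delay of any individual wave is dominated by the high-probability bunch-size bound rather than by the worst-case possibility that many sources concurrently route through a single low-degree cut. I expect this to require a careful inductive argument tracking, for each edge and each round, the identities (not just counts) of the updates waiting in the queue.
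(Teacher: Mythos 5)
Your proposal follows essentially the same route as the paper: truncated multi-source Bellman--Ford waves per level, pruned by the bunch condition $d(v,w)<\delta(A_{i+1},v)$, with congestion controlled by the with-high-probability bound $|B_i(u)|=O(n^{1/k}\log n)$ and an end-to-end delay of (hops)\,$\times$\,(per-hop congestion) $=O(n^{1/k}S\log n)$ per level. The only organizational difference is that you compute all pivots $\delta(A_i,v)$ in a separate preliminary stage, whereas the paper runs the $k$ phases top-down and extracts $d(u,A_{i+1})$ from the bunch computed in the preceding phase; either works, though yours still needs a phase-synchronization or termination-detection mechanism (the paper devotes a subsection to this) since nodes do not know $S$ a priori.

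The one step that does not survive as written is your congestion accounting. With plain FIFO queues, the \emph{total volume} of traffic a node $u$ emits for stage (iii) is not $O(kn^{1/k}\log n)$ words: Bellman--Ford can relax the estimate for a single source $w\in B(u)$ up to $S$ times, so a FIFO queue can accumulate many stale updates for the same source and the pipeline depth is not bounded by the bunch size. You correctly flag this as the hard part; the paper's fix is exactly the natural one and you should adopt it: keep \emph{one queue slot per source}, letting a newer estimate for $w$ overwrite any pending unsent one, and serve the nonempty slots round-robin. Then at any instant at most $O(n^{1/k}\log n)$ slots are nonempty (by the bunch-size bound), so once the node on hop $j-1$ of a shortest path learns its final distance to $w$, it forwards it within $O(n^{1/k}\log n)$ rounds, and an induction on $j$ gives arrival time $O(n^{1/k}j\log n)\le O(n^{1/k}S\log n)$. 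No tracking of queue identities over time is needed beyond this. With that substitution your argument matches the paper's Lemma on phase complexity, and the stretch and size bounds are indeed inherited directly from Thorup--Zwick.
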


The value $S$ in this theorem is known as the \emph{shortest-path diameter}~\cite{khan-disc}, and is (informally) the maximum over all ${n \choose 2}$ shortest paths (where ``short" is determined by weight) of the number of hops on the path.  (In an unweighted network, $S$ is the same as the network diameter $D$, hence $S$ can be thought
of as a generalization of $D$ in a weighted network). Note that this is essentially a lower bound on any distance computation.

In Section~\ref{sec:slack} we show how to extend the techniques of Chan, Dinitz, and Gupta~\cite{CDG06} and combine them with the techniques of Section~\ref{sec:main} to give sketches with ``slack".  Informally, a sketch has $\epsilon$-slack if the stretch factor (i.e.~the distance approximation guarantee) only holds for a $(1-\epsilon)$-fraction of the pairs, rather than all pairs.  While this is a weaker guarantee, since some pairs have no bound on the accuracy of the distance estimate at all, both the size of the sketches and the time needed to construct them become much smaller.  For example, when $\epsilon$ is a constant (even a small constant) we can construct constant stretch sketches in only $O(S \log^2 n)$ rounds.

\begin{theorem}
For any $\epsilon > 0$ and $1 \leq k \leq O(\log \frac{1}{\epsilon})$, there is a distributed sketching algorithm that gives sketches with size at most $O(k \left(\frac{1}{\epsilon} \log n\right)^{1/k} \log n)$ words and stretch $8k-1$ with $\epsilon$-slack that completes in at most $O\left(k S \left(\frac{1}{\epsilon} \log n\right)^{1/k} \log n\right)$ rounds and $O\left(k S |E|\left(\frac{1}{\epsilon} \log n\right)^{1/k} \log n\right)$ messages.
\end{theorem}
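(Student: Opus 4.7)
The plan is to combine the distributed Thorup--Zwick algorithm of Theorem~\ref{thm:TZ_intro} with the slack-sketch construction of Chan, Dinitz, and Gupta~\cite{CDG06}. The key observation is that for $\epsilon$-slack it is enough to build a TZ-style hierarchy in which the top sample $A\subseteq V$ has size only $|A|=\Theta(\tfrac{1}{\epsilon}\log n)$ rather than $n$; every factor of $n^{1/k}$ in Theorem~\ref{thm:TZ_intro} then becomes a factor of $|A|^{1/k}=(\tfrac{1}{\epsilon}\log n)^{1/k}$, which immediately gives the claimed round, message, and sketch-size bounds.

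The algorithm is as follows. Include each vertex in $A$ independently with probability $p=\Theta(\tfrac{1}{\epsilon n}\log n)$, then form a nested sub-sample $A=A_0\supseteq A_1\supseteq\cdots\supseteq A_{k-1}$ by subsampling each $A_i$ with probability $|A|^{-1/k}$, exactly as in TZ. Run the distributed TZ primitives from Section~\ref{sec:main} on this hierarchy; each node $v$ thereby learns its nearest pivot $p_i(v)\in A_i$ and the distance $d(v,p_i(v))$, as well as its bunch $B(v)\subseteq A$. In particular every $v$ also knows its nearest $A$-vertex $a(v)$ and $d(v,a(v))$. Queries are answered by the usual TZ rule, supplemented (when both endpoints lie outside $A$) by routing through $A$ via $a(u)$ and $a(v)$.

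The slack guarantee comes from a one-line Chernoff bound: because $|A|=\Theta(\tfrac{1}{\epsilon}\log n)$, with high probability every $v$ has at least one element of $A$ among its $\epsilon n$ nearest neighbours, so $d(v,A)\le d(v,N_\epsilon(v))$, where $N_\epsilon(v)$ is $v$'s $(\epsilon n)$-th closest vertex. Call a pair $(u,v)$ \emph{good} if $d(u,v)\ge\max\{d(u,A),d(v,A)\}$; per vertex at most $\epsilon n$ partners fail this, so after absorbing a constant factor into the sampling probability the good pairs form a $(1-\epsilon)$ fraction, and it suffices to establish the $(8k-1)$-stretch bound on them.

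The main technical step is the stretch analysis on good pairs, and this is where I expect the bulk of the effort. For a good pair both $a(u)$ and $a(v)$ lie in $A$; applying the TZ sketch restricted to $A$ yields $\tilde d(a(u),a(v))\le(2k-1)\,d(a(u),a(v))$, and the triangle-inequality estimate $d(a(u),a(v))\le d(a(u),u)+d(u,v)+d(v,a(v))\le 3\,d(u,v)$ combined with $\tilde d(u,v)=d(u,a(u))+\tilde d(a(u),a(v))+d(a(v),v)$ yields a stretch of the form $O(k)$, with the precise $8k-1$ constant extracted by the careful case analysis of~\cite{CDG06}. The most delicate point is verifying that the TZ bunch invariant ``$w\in B(u)$ iff $d(u,w)<d(u,A_{i+1})$'' is still correctly maintained when the distributed primitives of Section~\ref{sec:main} are fed the restricted universe $A$ instead of $V$; once this is checked, the complexity bounds follow from a direct substitution $n\mapsto |A|$ into Theorem~\ref{thm:TZ_intro}, and the range restriction $k=O(\log\tfrac{1}{\epsilon})$ merely ensures $|A|^{1/k}$ does not collapse to a constant factor.
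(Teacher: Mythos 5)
Your proposal is correct and follows essentially the same route as the paper: the sampled set $A$ of size $\Theta(\frac{1}{\epsilon}\log n)$ is exactly the paper's $\epsilon$-density net, the sketch (nearest net point plus a Thorup--Zwick label computed over the net) is the paper's $(\epsilon,k)$-CDG sketch, and the complexity bounds follow by the same substitution of the net size for $n$ in the phase analysis. The only point you defer --- extracting the exact $8k-1$ constant --- is in the paper just a three-line triangle-inequality computation bounding $d(u,u')+d(v,v')\leq 3d(u,v)$ and $d(u',v')\leq 4d(u,v)$, so nothing essential is missing.
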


Finally, in Section~\ref{sec:average} we extend the slack techniques further, allowing us to efficiently construct sketches with the same worst-case stretch as in Theorem~\ref{thm:TZ_intro} (with $k = O(\log n)$) but with average stretch only $O(1)$:

\begin{theorem}
There is a distributed sketching algorithm that gives sketches of size $O(\log^4 n)$ with $O(\log n)$-stretch and $O(1)$ average stretch that completes in at most $O(S \log^4 n)$ rounds and at most $O(S |E| \log^4 n)$ messages.
\end{theorem}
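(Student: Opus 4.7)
The plan is to apply the ``gracefully degrading'' layering approach of Chan, Dinitz and Gupta~\cite{CDG06} on top of the slack-sketching algorithm of the previous theorem. I build a family of $L$ slack sketches at geometrically shrinking slack parameters, and answer a query $(u,v)$ by returning the minimum distance estimate among all of them.

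Concretely, fix $L = 2\lceil\log n\rceil+1$ and, for $i = 1,2,\ldots,L$, set $\epsilon_i = 2^{-i}$ and $k_i = i$; then invoke the previous theorem with parameters $(\epsilon_i,k_i)$. The level-$i$ sketch has stretch $8k_i-1 = O(i)$ with $\epsilon_i$-slack and size
\[
O\!\left(k_i\left(\tfrac{1}{\epsilon_i}\log n\right)^{1/k_i}\log n\right) = O\!\left(i\cdot 2(\log n)^{1/i}\log n\right) = O(i\log^2 n),
\]
with round complexity $O(iS\log^2 n)$ and message complexity $O(iS|E|\log^2 n)$. Summing over $i = 1,\ldots,L = O(\log n)$ yields total size $O(L^2\log^2 n) = O(\log^4 n)$, rounds $O(S\log^4 n)$, and messages $O(S|E|\log^4 n)$; a union bound preserves the ``with high probability'' guarantee across all $L$ levels. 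Since $\epsilon_L < 1/\binom{n}{2}$, every pair is ``good'' at level $L$, so the minimum-over-levels estimator always attains stretch at most $8L-1 = O(\log n)$, giving the worst-case bound.

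For the $O(1)$ average-stretch bound I follow~\cite{CDG06} and run the $L$ invocations in a coupled fashion over a single nested landmark hierarchy so that the resulting slack sketches are \emph{nested}: if a pair is good at level $i$ (its level-$i$ estimate attains stretch $8i-1$), then it is good at every level $j\geq i$ as well. Letting $i^*(u,v)$ be the smallest level at which $(u,v)$ is good, the $\epsilon_{i-1}$-slack guarantee at level $i-1$ implies that the fraction of pairs with $i^*(u,v)\geq i$ is at most $\epsilon_{i-1}=2^{-(i-1)}$. Since the minimum-over-levels estimator attains stretch $O(i^*(u,v))$ on $(u,v)$, the average stretch over all pairs telescopes to
\[
\sum_{i=1}^L O(i)\cdot 2^{-(i-1)} = O(1).
\]

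The principal technical obstacle is establishing the nestedness of the $L$ sketches without inflating the per-level cost, since the slack theorem of Section~\ref{sec:slack} is stated for a single $\epsilon$ and run independently would neither be nested nor (obviously) fit within the claimed bounds. I would verify that the same construction, when driven by a single nested hierarchy $A_0\supseteq A_1\supseteq\cdots\supseteq A_L$ whose sampling rates are tuned so that $|A_{k_i}|$ matches the distribution required for slack $\epsilon_i$, still yields the required slack and stretch at every level simultaneously. The distributed landmark-broadcast primitives from Section~\ref{sec:main} can then be shared across levels (one set of BFS explorations per landmark, reused by every level containing that landmark), so the round and message complexities remain additive across levels and the bounds above are preserved.
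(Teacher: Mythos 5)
Your construction is the same as the paper's: take the union of $O(\log n)$ slack sketches from Theorem~\ref{thm:slack_main} at $\epsilon_i=2^{-i}$ with $k_i=\Theta(i)$, answer queries by the minimum over levels, and sum the per-level size/round/message bounds to get $O(\log^4 n)$, $O(S\log^4 n)$, and $O(S|E|\log^4 n)$; that part, and the worst-case $O(\log n)$ stretch via the deepest level, match the paper. Where you diverge is the average-stretch argument. The paper abstracts the combined object as a \emph{gracefully degrading} sketch and bounds the average stretch by an annulus decomposition ($A(u,i)$, the $n/2^i$ points between the $1/2^i$- and $1/2^{i-1}$-balls around $u$, each incurring stretch $O(i)$), whereas you do a direct tail sum over the first ``good'' level $i^*(u,v)$. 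Both are valid, but your claim that this requires the $L$ sketches to be \emph{nested} --- and the ensuing coupled-hierarchy construction, which you leave unverified (``I would verify that\dots'') --- is a red herring. The bound $\Pr[i^*(u,v)\ge i]\le\epsilon_{i-1}$ needs no coupling across levels: the event $\{i^*\ge i\}$ is contained in ``not good at level $i-1$,'' which by the contrapositive of the level-$(i-1)$ slack guarantee is contained in ``$v$ is not $\epsilon_{i-1}$-far from $u$,'' a set of density at most $\epsilon_{i-1}$ for each $u$. So fully independent, back-to-back runs (as in the paper) already give $\E[O(i^*)]\le\sum_i O(i)2^{-(i-1)}=O(1)$, and you should drop the nested-hierarchy machinery rather than try to justify it. The one point you do need to make explicit (the paper does) is that \emph{every} level's estimate is at least $d(u,v)$ for \emph{every} pair, not only for $\epsilon_i$-far pairs --- this holds because each slack estimate is a sum of true distances plus a Thorup--Zwick overestimate --- since otherwise taking the minimum over levels would not be sound.
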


\section{Related Work and Model}

\subsection{Applications and Related Works}
\label{sec:applications}


Applications of approximate distance computations in distributed networks include token management~\cite{IJ90, BBF04,CTW93}, load balancing~\cite{KR04}, small-world routing~\cite{K00}, and search~\cite{ZS06,AHLP01,C05,GMS05,LCCLS02}. Several other areas of distributed computing also use distance computations in a crucial way; some examples are information propagation and gathering~\cite{BAS04,KKD01}, network topology construction~\cite{GMS05,LawS03,LKRG03}, monitoring overlays~\cite{MG07}, group communication in ad-hoc network~\cite{DSW06}, gathering and dissemination of information over a network~\cite{AKL+79}, and peer-to-peer membership management~\cite{GKM03,ZSS05}.

The most concrete application of our algorithms is to quickly computing approximate shortest path distances in networks, i.e.~the normal application of network coordinate systems. In particular, in weighted networks, after using our algorithms to preprocess the network and create distance sketches we can compute the approximate distance between any two nodes in at most $O(D)$ times the size of the sketch rounds (where $D$, the hop-diameter, is the maximum over all pairs of nodes of the minimum number of hops between the nodes) by simply exchanging the sketches of the two nodes.  On the other hand, note that any distance computation without using preprocessing (say, Dijkstra's algorithm, Bellman-Ford, or even a simple network ping to obtain the round-trip time) will take at least $\Omega(S)$ rounds, where $S$ is  the shortest path diameter.  This is less than ideal since $S$ can be as  large as $n$, the number of nodes in the networks,  whereas $D$, the hop-diameter can be, and typically is, much smaller.
 Therefore our sketches yield improved algorithms for pairwise weighted-distance computations. Moreover,
 in networks such as P2P networks and overlay networks, using our algorithms a node can compute distances (number of hops in the overlay) in  {\em constant} times size of sketch rounds  if it simply knows the IP address of the other node: it can directly contact the other node using its IP address  and ask for its sketch.
Thus these  sketch techniques can be  very relevant and applicable even for unweighted distance computations.

Probably the closest results to ours are from the theory behind Meridian~\cite{WSS05}, which is based on a modification of the ``ring-of-neighbors" theoretical framework developed by Slivkins~\cite{S05b,S07} to prove theoretical bounds.  However, there are some substantial differences.  For one, the bounds given by these papers (including~\cite{WSS05}) are limited to special types of metric spaces: those with either bounded doubling dimension or bounded growth.  Our bounds hold for all (weighted) graphs (with, of course, weaker guarantees on the stretch).  Furthermore, the distributed framework used by Slivkins is significantly different from the standard CONGEST model of distributed computation that we use, and is based on being able to work in the metric completion of the graph.  This means, for example, that the algorithms in~\cite{S07} have the ability to send a unit-size packet between \emph{any} two nodes in $O(1)$ time (and the algorithms do make strong use of this ability).



\subsection{Model and Notation} \label{sec:model}
We model a communication network as a weighted, undirected, connected $n$-node graph $G = (V, E)$. Every  node has limited initial knowledge. Specifically, assume that each node is associated with a distinct identity number  (e.g., its IP address).
At the beginning of the computation, each node $v$ accepts as input its own identity number and the identity numbers of its neighbors in $G$. The node may also accept some additional inputs as specified by the problem at hand. The nodes are allowed to communicate through the edges of the graph $G$. We assume that the communication occurs in  synchronous  {\em rounds}.
We will use only small-sized messages. In particular, in each round, each node $v$ is allowed to send a message of size $O(\log n)$ through each edge $e = (v, u)$ that is adjacent to $v$.  The message  will arrive at $u$ at the end of the current round.  We also assume that all edge weights are at most polynomial in $n$, and thus in a single round a distance or node ID can be sent through each edge.  A \emph{word} is a block of $O(\log n)$ bits that is sufficient to store either a node ID or a network distance.

This is a  widely used  standard model  to study distributed algorithms (called the {\em CONGEST model}, e.g., see \cite{peleg, PK09}) and captures the bandwidth constraints inherent in real-world computer  networks.  (For example, classical network algorithms
studied include algorithms for shortest paths (Bellman-Ford, Dijsktra), minimum spanning trees etc.) Our algorithms can be easily generalized if $B$ bits  are allowed (for any pre-specified parameter $B$) to be sent through each edge in a round. Typically, as assumed here, $B = O(\log n)$, which is number of bits needed to send a node id in an $n$-node network.
We assume that $n$ (or some constant factor estimate of $n$) is common knowledge among nodes in the network. 

Every edge in the network has some nonnegative weight associated with it, and the distance between two nodes is the minimum, over all paths between the nodes, of the sum of the weights of the edges on the path.  In other words, the normal shortest-path distance with edge weights.  We let $d(u,v)$ denote this distance for all $u,v \in V$.  For a set $A \subseteq V$ and a node $u \in V$, we define the distance from the node to the set to be $d(u,A) = \min\{d(u,a) : a \in A\}$.  For a node $u \in V$ and real number $r \in \mathbb{R}^{\geq 0}$, the \emph{ball} around $u$ of radius $r$ is defined to be $B(u,r) = \{v \in V : d(u,v) \leq r\}$.

The \emph{hop-diameter} $D$ of $G$ is defined to be the maximum over all pairs $u,v$ in $V$ of  the number of hops between $u$ and $v$.  In other words, its the maximum over all pairs of the distance between $u$ and $v$ but where distance is computed assuming that all edge weights are $1$, rather than their actual weights.  The \emph{shortest-path diameter} $S$ of $G$ is slightly more complicated to define.  For $u,v \in V$, let $\mathcal P_{u,v}$ be the set of simple paths between $u$ and $v$ with total weight equal to $d(u,v)$ (by the definition of $d(u,v)$ there is at least one such path).  Let $h(u,v)$ be the minimum, over all paths in $P \in \mathcal P_{u,v}$, of the number of hops in $P$ (i.e.~the number of edges).  Then $S = \max_{u,v \in V} h(u,v)$.  It is easy to see that $D \leq S$  and in general, any method of computing the distance from $u$ to $v$ must use at least $S$ rounds (or else the shortest path will not be discovered).


\section{Distributed Sketches} \label{sec:main}

We are concerned with the problem of constructing a
distance labeling scheme in a distributed manner.  Given an input (weighted) graph $G =
(V,E)$, we want a distributed algorithm so that at termination every
node $u \in V$ knows a small label (or sketch) $L(u)$ with the
property that we can (quickly) compute an approximation to the
distance between $u$ and $v$ just from $L(u)$ and $L(v)$. Since the requirement of
sketch sizes and latency in distance computation may vary from application to application, typically
one would like a trade-off between the distance approximation and these parameters.
%

\subsection{Thorup-Zwick Construction}
The famous algorithm for constructing distance sketches in a centralized manner is Thorup-Zwick~\cite{TZ05}, which works as follows.  They first create a hierarchy of node sets: $A_0 = V$, and for $1 \leq i \leq k-1$, we get $A_i$ by randomly sampling every vertex in $A_{i-1}$ with probability $n^{-1/k}$, i.e.~every vertex in $A_{i-1}$ is included in $A_i$ with probability $n^{-1/k}$.  We set $A_k = \emptyset$ and $d(u, A_k) = \infty$ by definition.


Let $p_i(u)$ be the vertex in $A_i$ with minimum distance from $u$.  Let $B_i(u) = \{w \in A_i : d(u,w) < d(u,A_{i+1})\}$, and let $B(u) = \cup_{i=0}^{k-1} B_i(u)$ (for now we will assume that all distances are distinct; this can be made without loss of generality by breaking ties consistently through processor IDs or some other method).  $B(u)$ is called the \emph{bunch} of $u$.  The label $L(u)$ of $u$ consists of all nodes $\{p_i(u)\}_{i=0}^{k-1}$ and $B(u)$, as well as the distances to all of these nodes.  Thorup and Zwick showed that these labels are enough to approximately compute the distance, and that these labels are small.  We give sketches of these proofs for completeness.

\begin{lemma}[\cite{TZ05}] \label{lem:TZ_small}
For all $u \in V$, the expected size of $L(u)$ is at most $O(k n^{1/k})$ words.
\end{lemma}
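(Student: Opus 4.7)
The plan is to bound $\mathbb{E}[|L(u)|]$ by separately accounting for the pivots $\{p_i(u)\}_{i=0}^{k-1}$ and the bunches $B_i(u)$. The pivots contribute at most $k$ words (plus the stored distances), so the work is in bounding $\mathbb{E}[|B_i(u)|]$ for each level $i$, and then summing over the $k$ levels.

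For a fixed $i$ and fixed $u$, I would sort the vertices in $A_i$ by distance from $u$, writing them as $w_1, w_2, \ldots, w_{|A_i|}$ in nondecreasing order of $d(u, w_j)$ (with ties broken by some fixed rule). By the definition of the bunch, $w_j \in B_i(u)$ iff $d(u, w_j) < d(u, A_{i+1})$, which (since distances are distinct) is equivalent to saying that $w_1, \ldots, w_j$ all lie outside $A_{i+1}$. Thus $|B_i(u)|$ is exactly the index of the first vertex in this ordering that belongs to $A_{i+1}$, minus one.

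The key probabilistic observation is that $A_{i+1}$ is obtained from $A_i$ by including each vertex of $A_i$ independently with probability $n^{-1/k}$. Conditioning on $A_i$ and on the ordering $w_1, w_2, \ldots$, whether $w_j \in A_{i+1}$ is an independent Bernoulli with parameter $n^{-1/k}$. Hence $|B_i(u)| + 1$ stochastically dominates nothing larger than a geometric random variable with success probability $n^{-1/k}$, giving $\mathbb{E}[|B_i(u)|] \leq n^{1/k}$. (For $i = k-1$ we use $A_k = \emptyset$, but this can be handled by observing that $|A_{k-1}|$ is itself $O(n^{1/k} \log n)$ with high probability, or more simply by noting that every element of $A_{k-1}$ lies in $B_{k-1}(u)$ since $d(u, A_k) = \infty$, and $\mathbb{E}[|A_{k-1}|] \leq n^{1/k}$ by construction.)

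Summing over $i = 0, 1, \ldots, k-1$ yields $\mathbb{E}[|B(u)|] \leq k \cdot n^{1/k}$, and adding the $k$ pivots gives $\mathbb{E}[|L(u)|] = O(k n^{1/k})$ words. The only minor obstacle is carefully handling the boundary case $i = k-1$ and justifying the geometric-variable bound in the presence of a finite population $A_i$; I would dispatch this by noting that the truncated geometric only has smaller expectation than the untruncated one, so the $n^{1/k}$ bound remains valid.
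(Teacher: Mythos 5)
Your proposal is correct and follows essentially the same argument as the paper: ordering the vertices of $A_i$ by distance from $u$ and observing that $|B_i(u)|$ is the number of independent $n^{-1/k}$-coins flipped before the first success, hence has expectation at most $n^{1/k}$, then summing over the $k$ levels. Your explicit treatment of the $i = k-1$ boundary case (via $\mathbb{E}[|A_{k-1}|] = n^{1/k}$) is a detail the paper leaves implicit, but the approach is identical.
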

\begin{proof}
We prove that the expected size of $B_i(u)$ is at most $n^{1/k}$ for every $0 \leq i \leq k-1$, which clearly implies the lemma via linearity of expectation.  Suppose that we have already made the random decisions that define levels $A_0, \dots, A_i$, and now for each $v \in A_i$ we flip the coin to see if it is also in $A_{i+1}$.  If we flip these coins in order of distance from $u$ (this is just in the analysis; the algorithm can flip the coins simultaneously or in arbitrary order) then the size of $B_i(u)$ is just the number of coins we flip before we see a heads, where the probability of flipping a heads is $n^{-1/k}$.  In expectation this is $n^{1/k}$.
\end{proof}

\begin{lemma}[\cite{TZ05}] \label{lem:TZ_stretch}
Given $L(u)$ and $L(v)$ for some $u,v \in V$, we can compute a distance estimate $d'(u,v)$ with $d(u,v) \leq d'(u,v) \leq (2k-1) d(u,v)$ in time $O(k)$.
\end{lemma}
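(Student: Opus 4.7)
The plan is to describe the standard Thorup--Zwick query algorithm and then analyze it by induction. The query procedure $\textsc{DistEst}(L(u),L(v))$ goes as follows: initialize $i \gets 0$, $w \gets p_0(u) = u$, and maintain two endpoints $(a,b)$ initially $(u,v)$; while $w \notin B(b)$, increment $i$, swap $a \leftrightarrow b$, and set $w \gets p_i(a)$; when the loop exits, return $d'(u,v) = d(a,w) + d(w,b)$, which is computable since $w \in B(b)$ means $L(b)$ stores $d(b,w)$, and $L(a)$ stores $d(a,w)$ because $w = p_i(a)$. Each iteration is $O(1)$ (membership in $B(b)$ can be tested with a hash table stored alongside $L(b)$), so the runtime is $O(k)$ provided the loop terminates by $i = k-1$.

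First I would argue termination. When $i = k-1$, the active pivot is $w = p_{k-1}(a) \in A_{k-1}$. Since $A_k = \emptyset$ and $d(b, A_k) = \infty$ by convention, every vertex of $A_{k-1}$ lies in $B_{k-1}(b) \subseteq B(b)$, so the loop exits at or before the $(k-1)$-st iteration. The lower bound $d'(u,v) \geq d(u,v)$ is then immediate from the triangle inequality applied to the returned sum $d(a,w) + d(w,b)$.

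The main work is the stretch upper bound. Writing $\Delta = d(u,v)$, I would prove by induction on $i$ the invariant that, at the start of iteration $i$, $d(a, p_i(a)) \leq i \Delta$, where $a$ is the current ``active'' endpoint. The base case $i = 0$ is trivial since $p_0(a) = a$. For the inductive step, assume the invariant holds at iteration $i$ with active endpoint $a$ and other endpoint $b$, and suppose the loop did not terminate, i.e.\ $w = p_i(a) \notin B(b)$. By the definition of $B_i(b)$, $w \in A_i$ but $w \notin B_i(b)$ forces $d(b, A_{i+1}) \leq d(b, w)$, hence
\[
d(b, p_{i+1}(b)) = d(b, A_{i+1}) \leq d(b, w) \leq d(b,a) + d(a, w) \leq \Delta + i \Delta = (i+1)\Delta,
\]
using the inductive hypothesis and the triangle inequality. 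After the swap, $b$ becomes the new active endpoint, so the invariant holds at iteration $i+1$. At termination in some iteration $j \leq k-1$, we have $d(a, w) \leq j \Delta$ and $w \in B(b)$, and by the same triangle-inequality calculation $d(b, w) \leq d(b,a) + d(a,w) \leq (j+1)\Delta$. Therefore $d'(u,v) = d(a,w) + d(w,b) \leq j\Delta + (j+1)\Delta = (2j+1)\Delta \leq (2k-1)\Delta$.

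The one subtle point, and the place I would be most careful, is the role-swap bookkeeping: one must check that the inductive bound on $d(a, p_i(a))$ is maintained with respect to the \emph{current} active endpoint after each swap, and that the final returned quantity genuinely combines an $L(a)$-known distance with an $L(b)$-known distance. Both follow from the fact that $w = p_i(a)$ is always stored in $L(a)$ (together with $d(a,w)$), while the loop's exit condition guarantees $w \in B(b)$ so $d(b,w)$ is stored in $L(b)$.
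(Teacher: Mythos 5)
Your proof is correct and follows essentially the same route as the paper's: the same induction showing $d(\cdot,p_i(\cdot)) \le i\,d(u,v)$ via ``not in the bunch implies $d(b,A_{i+1}) \le d(b,w)$'' plus the triangle inequality, yielding the $(2i+1)d(u,v)$ bound. The only (cosmetic) difference is that you use the classic alternating-pivot loop that swaps roles each level, whereas the paper checks both conditions $p_i(u)\in B_i(v)$ and $p_i(v)\in B_i(u)$ at each level and maintains the invariant for both endpoints simultaneously.
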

\begin{proof}
For each $0 \leq i \leq k-1$, we check whether $p_i(u) \in B_i(v)$ or $p_i(v) \in B_i(u)$.  Let $i^*$ be the first level at which at least one of these events occurs.  Note that $i^*$ is well-defined and is at most $k-1$, since by definition $p_{k-1}(u) \in B_{k-1}(v)$ and $p_{k-1}(v) \in B_{k-1}(u)$.  If the first condition is true then we return distance estimate $d'(u,v) = d(u, p_{i^*}(u)) + d(v, p_{i^*}(u))$, and if the second condition is true then we return $d'(u,v) = d(u, p_{i^*}(v)) + d(v, p_{i^*}(v))$.  Note that the necessary distances are in the labels as part of $B_{i^*}(u)$ and $B_{i^*}(v)$, so we can indeed compute this from $L(u)$ and $L(v)$.

We first prove by induction that $d(u, p_i(u)) \leq i \cdot d(u,v)$ and $d(v, p_i(v)) \leq i \cdot d(u,v)$ for all $i \leq i^*$.  In the base case, when $i=0$, both inequalities are true by definition.  For the inductive step, let $1 \leq i \leq i^*$.  Since $i \leq i^*$ we know that $i-1 < i^*$, so $p_{i-1}(u) \not\in B_{i-1}(v)$ and $p_{i-1}(v) \not \in B_{i-1}(u)$.  This implies that $d(v, p_{i}(v)) \leq d(v, p_{i-1}(u)) \leq d(v,u) + d(u, p_{i-1}(u) \leq d(u,v) + (i-1) d(u,v) = i \cdot d(u,v)$, where the first inequality is from $i-1 < i^*$, the second is from the triangle inequality, and the third is from the inductive hypothesis.  Similarly, we get that  and $d(u, p_i(u)) \leq i \cdot d(u,v)$.

Now suppose without loss of generality that $p_{i^*}(v) \in B_{i^*}(u)$ (if the roles are reversed we can just switch the names of $u$ and $v$).  Then our distance estimate is $d'(u,v) = d(u, p_{i^*}(v)) + d(v, p_{i^*}(v)) \leq d(u,v) + 2d(v, p_{i^*}(v)) \leq d(u,v) + 2 i^* d(u,v) = (2i^* + 1) d(u,v)$, where the first inequality is from the triangle inequality and the second is from our previous inductive proof.  Since $i^* \leq k-1$, this gives a stretch bound of $2k-1$ as claimed.
\end{proof}

\subsection{Distributed Algorithm}
The natural question is whether we can construct these labels in a
distributed manner.  For a vertex $v \in A_i \setminus A_{i+1}$, let
$C(v) = \{w \in V : d(w,v) < d(w, A_{i+1})$.  This is called the
\emph{cluster} of $v$.  Note that the clusters are the inverse of the
bunches: $u \in C(v)$ if and only if $v \in B(u)$.  So we will
construct a distributed algorithm in which every vertex $u$ knows
exactly which clusters it is in and its distance from the centers of
those clusters, and thus is able to construct its label.  Also, it's easy to
see that the clusters are connected: if $u \in C(v)$ then obviously any
vertex $w$ on the shortest path from $u$ to $v$ is also in $C(v)$.

The distributed protocol is as follows: we first divide into $k$ phases,
where in phase $i$ we deal with clusters from vertices in $A_i
\setminus A_{i+1}$.  However, we run the phases from top to bottom --
we first do phase $k-1$, then phase $k-2$, down to phase $0$.  In
phase $i$ the goal is for every node $u \in V$ to know for every node
$v \in V$ whether $u \in C(v)$, and if so, its distance to $v$.  Thus every node $u$ will know $B_i(u)$ at the end of phase $i$.  We will give an upper bound on the length of each phase with respect to $n$ and $S$, so if every node knows both $n$ and $S$ they can all start each phase together by waiting until the upper bound is met.     For now we will make the assumption that every node knows $S$ (the shortest path diameter), thus solving the issue of synchronizing the beginning of the phases, but we will show in Section~\ref{sec:termination} how to remove this assumption.

Let us first consider phase $k-1$, i.e.~the first phase that is run.  This phase is especially simple, since $B_{k-1}(u) = A_{k-1}$ for every node $u$.  So at the end of this phase we simply want every node to know all of the nodes in $A_{k-1}$ and its distances to all of them.  This is known as the \emph{$k$-Source Shortest Paths Problem}, and can be done in $O(|A_{k-1}| S)$ rounds using $O(|A_{k-1}| |E| S)$ messages by running distributed Bellman-Ford from each node in $A_{k-1}$ simultaneously~\cite{PK09}.  In particular, for a fixed source $v \in A_{k-1}$ every node $u \in V$ runs the following protocol: initially, $u$ guesses that its distance to $v$ is $d'(u,v) = \infty$.  If it hears a message from a neighbor $w$ that contains a distance $a(w)$, then it checks if $d(u,w) + a(w) < d'(u,v)$.  If so, then it updates $d'(u,v)$ to $d(u,w) + a(w)$ and sends to all its neighbors a message that contains the new $d'(u,v)$.  This algorithm is given in detail as Algorithm~\ref{alg:BF}.

\begin{algorithm}[]
\caption{Basic Bellman-Ford for $u$}
\label{alg:BF}
\SetKwInOut{init}{Initialization}
\init{$d' = \infty$}
For each neighbor $w$ of $u$, get message $a(w)$
\\
$z \leftarrow \min_{w \in N(u)} \{a(w) + d(u,w)\}$
\\
\If{$z < d'$}{
$d' \leftarrow z$
\\
Send message $d'$ to all neighbors}
\end{algorithm}

In this description we assume that there is one source $v$ that is known to all nodes, but this clearly is not necessary.  With multiple unknown sources each message could also contain the ID of the source and each node $u$ could keep track of its guesses $d'(u, \cdot)$ for every source that it has seen at least one message from.  The standard analysis of Bellman-Ford (see e.g.~\cite{PK09}) gives the following lemmas:
\begin{lemma} \label{lem:correct_first_phase}
At the end of phase $k-1$, every node $u \in V$ knows which vertices are in $A_{k-1}$ as well as $d(u,v)$ for all $v \in A_{k-1}$.
\end{lemma}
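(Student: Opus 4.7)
The plan is to reduce this to the standard correctness argument for distributed Bellman-Ford, as cited from~\cite{PK09}, together with a scheduling argument to handle the bandwidth constraints of the CONGEST model. First I would initialize: every vertex $v \in A_{k-1}$ ``wakes up'' at the start of phase $k-1$, sets its own estimate $d'(v,v)=0$, and broadcasts a message containing its ID and distance $0$ to its neighbors. Every other node $u$ only learns of $v$ by eventually receiving a message tagged with $v$'s ID, so the node-set membership part of the claim reduces to the distance part: if $u$ eventually knows $d(u,v)$ for every $v \in A_{k-1}$, then in particular it has received at least one message from each such $v$ and therefore knows $A_{k-1}$.

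The correctness of the distances follows by a standard induction on hop count. Let me define $h(u,v)$ as in Section~\ref{sec:model} to be the minimum number of hops on any shortest $u$-to-$v$ path. I would prove by induction on $h$ that after sufficiently many rounds, any node $u$ with $h(u,v) \le h$ has $d'(u,v) = d(u,v)$. The base case $h = 0$ means $u = v$, which holds by initialization. For the inductive step, fix a shortest path from $u$ to $v$ and let $w$ be the neighbor of $u$ on it; then $h(w,v) \le h-1$ and $d(u,v) = d(u,w) + d(w,v)$, so once $w$'s estimate has stabilized to $d(w,v)$, the relaxation step in Algorithm~\ref{alg:BF} at $u$ forces $d'(u,v) \le d(u,w) + d(w,v) = d(u,v)$; the reverse inequality is immediate because every estimate corresponds to the weight of an actual path. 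Since shortest paths have at most $S$ hops, $S$ ``logical rounds'' of Bellman-Ford suffice.

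The main obstacle, and the only real departure from the textbook proof, is the bandwidth constraint: each edge carries only one $O(\log n)$-bit word per round, whereas in a given logical round up to $|A_{k-1}|$ sources might want to push updates across the same edge. I would handle this with the standard pipelining argument: each node maintains a FIFO queue per neighbor of (source ID, distance) pairs it still needs to forward, and sends one pair per round. Because each node generates at most one useful update per source (an estimate only strictly decreases when a better one is learned, and we only forward strictly improving values), the total queue size per edge over the whole phase is $O(|A_{k-1}| S)$ and all updates are delivered within $O(|A_{k-1}| S)$ rounds, matching the round bound claimed for phase $k-1$. Combining this scheduling argument with the hop-count induction above yields that at the end of phase $k-1$ every node $u$ has $d'(u,v) = d(u,v)$ for every $v \in A_{k-1}$ and has seen each such $v$'s ID, proving the lemma.
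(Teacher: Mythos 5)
Your proof is correct and follows essentially the same route as the paper, which does not spell out an argument but simply invokes the ``standard analysis of Bellman-Ford''~\cite{PK09}; your hop-count induction plus round-robin pipelining is exactly that analysis written out, and mirrors the paper's own proof of Lemma~\ref{lem:complexity} for the later phases. One small imprecision: a node can generate up to $S$ strictly improving updates per source (not one), so your parenthetical justification does not match the stated $O(|A_{k-1}|S)$ queue total; the cleaner way to get the round bound is to note that at most $|A_{k-1}|$ queues are ever nonempty at any node, so each logical Bellman-Ford round is stretched by a factor of $O(|A_{k-1}|)$ and logical round $j$ completes by actual round $O(|A_{k-1}|\, j)$, which is the argument the paper uses in Lemma~\ref{lem:complexity}.
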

\begin{lemma} \label{lem:complexity_first_phase}
Phase $k-1$ completes after at most $O(|A_{k-1}| S)$ rounds and at most $O(|E| \cdot |A_{k-1}| S)$ messages.
\end{lemma}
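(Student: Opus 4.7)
My plan is to reduce Lemma~\ref{lem:complexity_first_phase} to the standard analysis of distributed Bellman-Ford applied once per source in $A_{k-1}$, and then to account for the $O(\log n)$-bit bandwidth of the CONGEST model.

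First I would fix a single source $v \in A_{k-1}$ and prove by induction on $t$ that after $t$ rounds of Algorithm~\ref{alg:BF} every node $u$ with $h(u,v) \le t$ already has $d'(u,v) = d(u,v)$. The inductive step is the usual Bellman-Ford argument: the predecessor of $u$ on a minimum-hop shortest path from $u$ to $v$ has converged by the inductive hypothesis, broadcast its final value at that moment, and so $u$ receives it and tightens its own estimate by round $t$; meanwhile the ``min'' update rule guarantees no estimate ever drops below the true distance. By the definition of $S$ this gives convergence within $S$ rounds, matching Lemma~\ref{lem:correct_first_phase}. For the per-source message count, I would observe that each edge carries at most one distance update per round and the protocol only runs for $S$ rounds, yielding $O(|E| \cdot S)$ messages per source.

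Lifting to the multi-source setting can be done cleanly by running the $|A_{k-1}|$ single-source instances sequentially, which immediately gives $|A_{k-1}| \cdot S$ rounds and $|A_{k-1}| \cdot O(|E| \cdot S)$ messages; alternatively one can pipeline all instances concurrently, tagging each transmitted word with its source ID (both a distance and a node ID fit within the $O(\log n)$-bit word budget) and queueing messages FIFO at each edge. Since each per-source execution deposits at most $S$ updates on any given edge, every queue drains within $O(|A_{k-1}| \cdot S)$ rounds, and the global message count is independent of the schedule.

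The main obstacle I anticipate is making this bandwidth accounting tight without hand-waving: one might naively hope that concurrent execution of the $|A_{k-1}|$ Bellman-Fords terminates in $S$ rounds, but the one-word-per-edge-per-round capacity of the CONGEST model forces serialization of the per-source updates across each edge, so the $|A_{k-1}|$ factor blow-up in rounds is essential rather than an artifact of the analysis. I would make this explicit by noting that a node wishing to relay updates from all sources in a single round must serialize them onto each outgoing edge, and then invoke the queue-draining argument above to confirm both the round bound and, by summing per-source counts, the total message bound stated in the lemma.
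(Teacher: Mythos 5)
Your proposal is correct and is essentially the argument the paper itself invokes: the paper does not spell out a proof of Lemma~\ref{lem:complexity_first_phase} but simply appeals to ``the standard analysis of Bellman-Ford''~\cite{PK09} for the $k$-source shortest paths problem, and your single-source induction on hop count plus the per-source message count is exactly that standard analysis. Your sequential reduction ($|A_{k-1}|$ back-to-back single-source runs) is airtight and already yields the claimed $O(|A_{k-1}|S)$ rounds and $O(|A_{k-1}||E|S)$ messages, so the lemma is established. One caution on your pipelined alternative, which is closer to what the paper actually does (``simultaneously''): the claim that each per-source execution deposits at most $S$ updates on a given edge is not immediate once messages are delayed in queues, since a node may then receive a longer decreasing sequence of estimates for a single source than it would in the undelayed synchronous run (this is the classic source of blow-up in asynchronous Bellman-Ford). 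The clean way to handle the concurrent case is the delay-per-hop induction the paper uses for the later phases in Lemma~\ref{lem:complexity}: at most $|A_{k-1}|$ queues are ever nonempty, so the correct message from a source reaches the $j$-th node on a min-hop shortest path by round $O(|A_{k-1}|\cdot j)$, and the message bound then follows from the round bound since each edge carries $O(1)$ words per round.
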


To handle phase $i$, we will assume inductively that $B_{i+1}(u)$ is known to $u$ at the start of phase $i$, as well as the distance from $u$ to every node in $B_{i+1}(u)$.  In particular, we will assume that $u$ knows its distance to the \emph{closest} node in $A_{i+1}$, i.e.~$d(u, A_{i+1})$.  In phase $i$ we will simply use a modified version of Bellman-Ford in which the sources are $A_{i} \setminus A_{i+1}$, but node $u$ only ``participates" in the algorithm for sources $v \in A_i \setminus A_{i+1}$ when it gets a message that implies that $d(u,v) < d(u,A_{i+1})$, i.e.~that $v \in B_i(u)$.  To handle the multiple sources, each node $u$ will maintain for every possible source $v \in V$ an outgoing message queue, which will only ever have a $0$ or $1$ message in it.  $u$ just does round-robin scheduling among the nonempty queues, sending the current message to all neighbors and removing it from the queue.  To simplify the code, we will assume without loss of generality that $V = \{0,1,\dots,n-1\}$ (this assumption is only used to simplify the round-robin scheduler, and can easily be removed).  This algorithm is given as Algorithm~\ref{alg:BF_modified}.

\begin{algorithm}[]
\caption{Modified Bellman-Ford for node $u$ in phase $i$}
\label{alg:BF_modified}
\SetKwInOut{init}{Initialization}
Initialization:
\\
\ForEach{$v \in V\setminus \{u\}$}{
$d'(v) \leftarrow \infty$
\\
$q(v) \leftarrow 0$
\\
$i \leftarrow 0$
}
\BlankLine
In the first round:
\\
\If{$u \in A_i \setminus A_{i+1}$}{Send message $\langle u, 0 \rangle$ to all neighbors}
\BlankLine
In each round:
\\
\tcp{Receive and process new messages}
\ForEach{$w \in N(u)$}{
Get message $m(w) = \langle v_w, a_w \rangle$
\\
\If{$a_w + d(u,w) < d(u, A_{i+1}) \land a_w + d(u,w) < d'(v_w)$}{
$d'(v_w) \leftarrow a_w + d(u,w)$
\\
$q(v_w) \leftarrow 1$
}}
\tcp{Send message from next nonempty queue}
$i' \leftarrow i$
\\
$i \leftarrow (i+1) \% n$
\\
\lWhile{$q(i) == 0 \land i \neq i'$}{$i \leftarrow (i+1) \% n$}
\\
\If{q(i) == 1}{
Send message $\langle i, d'(i) \rangle$ to all neighbors
\\
$q(i) \leftarrow 0$
}
\end{algorithm}

\begin{lemma} \label{lem:correct}
At the end of phase $i$, every node $u \in V$ knows $B_i(u)$ and its
distance to all nodes in $B_i(u)$.
\end{lemma}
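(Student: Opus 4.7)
My plan is to prove correctness via a two-sided argument: soundness (nothing spurious is recorded) and completeness (every $v \in B_i(u)$ is eventually recorded with the correct distance $d(u,v)$). For soundness, I would maintain as an invariant that every message $\langle v, a\rangle$ ever transmitted by any node $w$ satisfies $a \geq d(w,v)$. This holds at initialization, since only sources $v \in A_i \setminus A_{i+1}$ broadcast, with value $0 = d(v,v)$, and it is preserved by the update rule: if $u$ later retransmits $\langle v_w, a_w + d(u,w)\rangle$, then by the invariant at $w$ and the triangle inequality we have $a_w + d(u,w) \geq d(w,v_w) + d(u,w) \geq d(u,v_w)$. Because $u$ writes into $d'(v_w)$ only when $a_w + d(u,w) < d(u, A_{i+1})$, every stored entry certifies $d(u, v_w) < d(u, A_{i+1})$, so $v_w \in B_i(u)$. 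The standard Bellman--Ford relaxation argument then upgrades the inequality $d'(v) \geq d(u,v)$ to exact equality once all relevant messages have been processed.

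For completeness, the key ingredient is the already-noted fact that level-$i$ clusters are connected: if $v \in B_i(u)$, then every vertex $w$ on a shortest $v$-to-$u$ path also has $v \in B_i(w)$, so the gating check $a_w + d(u,w) < d(u, A_{i+1})$ never suppresses propagation of the message about $v$ along that path. I would induct on the hop-length $h$ of the shortest $v$-$u$ path. The base case $h=0$ is immediate, since the source $u=v$ enqueues itself in the first round. For the inductive step, write the path as $v = w_0, w_1, \ldots, w_h = u$ and apply the hypothesis to $w_{h-1}$; once $w_{h-1}$ records $d'(v) = d(w_{h-1}, v)$, the flag $q(v)$ is set to $1$ at $w_{h-1}$, and under round-robin the message $\langle v, d(w_{h-1},v)\rangle$ is eventually sent to $u$. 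Since $v \in B_i(u)$, the gating check passes at $u$, and $u$ relaxes $d'(v)$ down to $d(w_{h-1}, v) + d(u, w_{h-1}) = d(u,v)$, as required.

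The main obstacle, as I see it, is reconciling the round-robin scheduler with the familiar ``hop-$h$ vertices converge after $h$ relaxations'' reading of Bellman--Ford: a node may have up to $|A_i \setminus A_{i+1}|$ pending entries in its queues, so a single relaxation can be delayed by an entire queue cycle rather than being delivered in the very next round. I would address this by observing that $d'(v)$ is monotone non-increasing and that the single-bit queue flag $q(v)$ always marks the latest (hence smallest) estimate as unsent, so no relaxation is lost, only delayed by at most one full round-robin cycle. The correctness statement then holds provided the phase runs long enough to flush every queued relaxation after all shortest-path relaxations have propagated, which is exactly what the accompanying phase-length analysis (the runtime lemma) will quantify; here one only needs that the phase is eventually long enough, which is built into the upper bound on its duration.
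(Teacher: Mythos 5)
Your proof is correct and follows essentially the same route as the paper's: an induction on the hop count of the shortest path from each source $v$, driven by the cluster-connectivity observation that the gating test $a_w + d(u,w) < d(u,A_{i+1})$ never blocks propagation along a shortest path into $C(v)$; your added soundness invariant and your explicit handling of round-robin delay spell out two points the paper leaves implicit (the latter is deferred to the quantitative bound in Lemma~\ref{lem:complexity}). The only piece missing relative to the paper's proof is the outer induction on phases (base case $i=k-1$ via Lemma~\ref{lem:correct_first_phase}), which is what justifies that each node already knows $d(u,A_{i+1})$ when evaluating the gate in phase $i$.
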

\begin{proof}
We prove this by induction on the phase.  The base case is phase
$k-1$, which is satisfied by Lemma~\ref{lem:correct_first_phase}.  Now consider some phase $i \geq 0$.  Let $v \in
A_{i} \setminus A_{i+1}$ -- we will show by induction on the hop count
of the shortest path that all nodes $u \in C(v)$ find out their
distance to $v$.  If $u \in C(v)$ is adjacent to $v$ via a shortest path, then
obviously after the first round it will know its distance to $v$.  If
$u \in C(v)$ is not adjacent to $v$ via a shortest path, then by
induction the next hop on the shortest path from $u$ to $v$ finds out
its correct distance to $v$, and thus will forward the announcement to
$u$.  Thus at the end of phase $i$ every node $u \in C(v)$ knows its
distance from $v$, and since this holds for every $v \in A_{i}
\setminus A_{i+1}$ we have that every $u \in V$ knows its distance to
all nodes in $B_i(u)$.
\end{proof}

Before we prove the time and message complexity bounds, we give a lemma that extends the expected size analysis of~\cite{TZ05} to give explicit tail bounds on the probability that the construction is large:

\begin{lemma} \label{lem:bunch_size}
For every $i \in \{0,\dots, k-1\}$ and every $u \in V$, the probability that $|B_i(u)| > O(n^{1/k} \ln n)$ is at most $1/n^3$.
\end{lemma}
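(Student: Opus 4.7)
The plan is to refine the expected-size argument of Lemma~\ref{lem:TZ_small} into a high-probability tail bound. I would handle the two cases $i < k-1$ and $i = k-1$ separately, since the Thorup--Zwick expectation argument uses the randomness in sampling $A_{i+1}$ from $A_i$, which is vacuous at the top layer where $A_k=\emptyset$.

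For $i < k-1$, I would condition on all random choices defining $A_0,\dots,A_i$ and order the vertices of $A_i$ by distance from $u$ as $v_1,v_2,\dots,v_{|A_i|}$ (breaking ties by ID, as in Lemma~\ref{lem:TZ_small}). Independently for each $v_j$, the algorithm places $v_j \in A_{i+1}$ with probability $p := n^{-1/k}$. Letting $J$ be the smallest index with $v_J \in A_{i+1}$, the definition of the bunch gives $B_i(u) = \{v_1,\dots,v_{J-1}\}$, so $|B_i(u)| = J-1$. Hence
$$\Pr[\,|B_i(u)| \geq T\,] \;\leq\; (1-p)^T \;\leq\; e^{-pT},$$
which is at most $1/n^3$ as soon as $T \geq 3n^{1/k}\ln n$. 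Note that this dominates the true bunch size even if fewer than $T$ vertices of $A_i$ exist, so no additional care is needed there.

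For $i = k-1$ the argument above breaks down because $A_k=\emptyset$ and there is no ``first heads'' truncating the bunch; instead $B_{k-1}(u) = A_{k-1}$ deterministically, so I would bound $|A_{k-1}|$ directly. Since $A_{k-1}$ is formed by including each vertex of $V$ independently with probability $n^{-(k-1)/k}$, we have $|A_{k-1}| \sim \mathrm{Bin}(n,\,n^{-(k-1)/k})$ with mean $\mu = n^{1/k}$. Applying the Chernoff bound $\Pr[X \geq R] \leq (e\mu/R)^R$ with $R = 6n^{1/k}\ln n$ gives $(e/(6\ln n))^R$, which is far smaller than $1/n^3$.

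Combining the two cases with a common constant (say $C = 6$) yields $\Pr[\,|B_i(u)| > Cn^{1/k}\ln n\,] \leq 1/n^3$ for every $i \in \{0,\dots,k-1\}$ and every $u \in V$. The main obstacle is the top layer: the clean geometric tail from the $i<k-1$ case does not apply, so the Binomial concentration bound must be invoked separately; everything else is a routine tail-bound upgrade of the expectation calculation already in Lemma~\ref{lem:TZ_small}.
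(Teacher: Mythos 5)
Your argument for $i < k-1$ is exactly the paper's proof: the event $|B_i(u)| \geq T$ forces the $T$ closest members of $A_i$ to $u$ all to fail the $n^{-1/k}$-probability coin flip for $A_{i+1}$, giving the geometric tail $(1-n^{-1/k})^T \leq e^{-Tn^{-1/k}} \leq 1/n^3$ for $T = 3n^{1/k}\ln n$. Where you differ is in splitting off the top level $i = k-1$, and this is a genuine (if minor) improvement in rigor: the paper applies the same one-line argument to all $i$, but as you observe, for $i = k-1$ the set $A_k$ is declared empty by fiat rather than obtained by sampling, so there is no coin flip to appeal to and $B_{k-1}(u) = A_{k-1}$ deterministically. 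Your separate binomial Chernoff bound on $|A_{k-1}|$ (mean $n^{1/k}$, so exceeding $6n^{1/k}\ln n$ has probability far below $1/n^3$) correctly patches this case, which the paper's proof glosses over; the same gap is already present, harmlessly, in the expectation argument of Lemma~\ref{lem:TZ_small}. The only cost of your route is the extra case analysis; the benefit is that the lemma is actually proved for all $i \in \{0,\dots,k-1\}$ as stated.
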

\begin{proof}
In order for $|B_i(u)| > 3 n^{1/k} \ln n$, the closest $3 n^{1/k} \ln n$ nodes in $A_i$ to $u$ must all decide not to be part of $A_{i+1}$.  Since the probability that any particular node in $A_i$ joins $A_{i+1}$ is $n^{-1/k}$, the probability that this happens is at most $(1-n^{-1/k})^{3 n^{1/k} \ln n} \leq e^{-3 \ln n} = 1/n^3$.
\end{proof}

We can now bound the time and message complexity of each phase:

\begin{lemma} \label{lem:complexity}
With high probability, each phase takes $O(n^{1/k} S \log n)$ rounds and $O(n^{1/k} S |E| \log n)$ messages.
\end{lemma}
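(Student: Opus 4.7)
The plan is to first control $|B_i(u)|$ uniformly, and then separately bound the round and message costs.

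By Lemma~\ref{lem:bunch_size}, for any fixed $u$ and $i$ we have $|B_i(u)| \le c \cdot n^{1/k} \ln n$ with probability at least $1 - 1/n^3$ for an appropriate constant $c$. Taking a union bound over the $n$ nodes (we are in a single phase $i$), I can set $T_i := c \cdot n^{1/k} \ln n$ and conclude that, with probability at least $1 - 1/n^2$, every node $u$ satisfies $|B_i(u)| \le T_i$. The rest of the argument conditions on this good event.

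For the round complexity I would induct on the hop-count of the shortest path. Specifically, I claim that for every source $v \in A_i \setminus A_{i+1}$ and every $u \in C(v)$, if the shortest $u$-to-$v$ path has $h$ hops then $u$ correctly sets $d'(v) = d(u,v)$ by round $O(h \cdot T_i)$. The base case $h=0$ (so $u=v$) is immediate since $v$ enqueues itself in round $1$. For the inductive step, let $w$ be the neighbor of $u$ that is the next hop on a shortest $u$-to-$v$ path; by the inductive hypothesis $w$ learns $d(w,v)$ by round $O((h-1) T_i)$, at which point the check $a_w + d(w, \cdot) < d(w, A_{i+1})$ succeeds at $w$ and $q(v)$ is set to $1$ at $w$. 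Since $w$'s round-robin pointer visits every nonempty slot and there are at most $|B_i(w)| \le T_i$ such slots, within at most $T_i$ additional rounds $w$ broadcasts $\langle v, d(w,v) \rangle$, and in the very next round $u$ updates $d'(v)$ to $d(u,w) + d(w,v) = d(u,v)$. Because $h(u,v) \le S$ for every pair, the phase is done within $O(S \cdot T_i) = O(n^{1/k} S \log n)$ rounds.

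For the message complexity I rely on the monotonicity of Bellman–Ford: at every node $u$ and source $v$, $d'(v)$ can only decrease, and once $d'(v) = d(u,v)$ no further improvement is possible. A standard argument (the same one used to bound synchronous Bellman–Ford at $S$ iterations) shows that the number of distinct decreasing values taken by $d'(v)$ at $u$ is at most the hop-length of the shortest $u$-to-$v$ path, hence at most $S$; the round-robin pacing may delay these improvements but cannot increase their number, since $q(v)$ is a $0/1$ flag that is only reset upon transmission. Consequently each node $u$ transmits at most $S$ messages per source $v \in B_i(u)$, and each such message is copied once to every neighbor. Summing,
\[
\sum_{u \in V} |B_i(u)| \cdot S \cdot \deg(u) \;\le\; T_i \cdot S \cdot \sum_{u \in V} \deg(u) \;=\; 2 \, T_i \cdot S \cdot |E| \;=\; O(n^{1/k} S |E| \log n).
\]

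The main obstacle is the inductive round argument, and specifically justifying that the round-robin scheduler costs only a multiplicative $T_i$ slowdown per hop rather than introducing a longer chain of delays. The key feature making this clean is the $0/1$ queue: at any node, at most one pending message per source can sit in the outgoing buffer, so each hop along a shortest path adds at most one full queue-cycle of length $\le T_i$ before the relevant update is forwarded, and the induction on hop count propagates cleanly to the $S$-hop bound.
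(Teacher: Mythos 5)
Your round-complexity argument is essentially the paper's: the same induction on hop count along a minimum-hop shortest path, with the round-robin scheduler contributing a multiplicative $O(|B_i(\cdot)|) = O(n^{1/k}\log n)$ delay per hop. Making the union bound over all $n$ nodes explicit before conditioning is a nice touch (the paper invokes Lemma~\ref{lem:bunch_size} somewhat implicitly here). One small thing you elide: the inductive hypothesis is applied to the next hop $w$, which requires $w \in C(v)$ so that $w$'s participation check $a + d(\cdot) < d(w, A_{i+1})$ actually passes; this holds because clusters are closed under taking shortest-path prefixes, a fact the paper records earlier, but you should say it.

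The message-complexity argument, however, has a genuine gap. You claim that ``the number of distinct decreasing values taken by $d'(v)$ at $u$ is at most the hop-length of the shortest $u$-to-$v$ path, hence at most $S$.'' That is not the standard Bellman--Ford fact, and it is false here. The standard fact is that after $t$ \emph{synchronized, full-participation} iterations, $d'$ equals the shortest distance over paths of at most $t$ hops, so convergence takes at most $S$ iterations; the number of \emph{updates} at a node coincides with $S$ only because in that setting there is at most one update per iteration. With round-robin pacing the relay times are staggered, and a node $u$ can receive progressively improving estimates from many different neighbors in sequence. Concretely, if $u$ is joined to the source $v$ by $m$ internally disjoint two-hop paths of strictly decreasing total weight, the intermediate nodes may (due to their own queue states) broadcast in order of decreasing path weight, causing $u$ to update and re-enqueue $m$ times even though $h(u,v)=2$; here $m$ can far exceed $S$. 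So your per-node, per-source bound of $S$ transmissions does not hold, and the displayed sum does not follow. The correct argument is much simpler and is the one the paper uses: Algorithm~\ref{alg:BF_modified} has each node broadcast at most one message per round (to all its neighbors), so at most $2|E|$ messages traverse edges in any round, and multiplying by the $O(n^{1/k} S \log n)$ round bound you already established gives $O(n^{1/k} S |E| \log n)$ messages.
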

\begin{proof}
  Let $v \in A_{i} \setminus A_{i+1}$.  Intuitively, if $v$ were the only vertex in
  $A_i \setminus A_{i+1}$, then in phase $i$ the algorithm devolves
  into distributed Bellman-Ford in which the only vertices that ever forward messages are
  vertices in $C(v)$.  This would clearly take $O(S)$ rounds.  In the
  general case, each vertex $u$ only participates in $O(|B_i(u)|) =
  O(n^{1/k} \log n)$ of these shortest path algorithms, so each ``round'' of
  the original algorithm can be split up into $O(n^{1/k} \log n)$ rounds to
  accommodate all of the different sources.  Thus the total time taken
  is $O(n^{1/k} S \log n)$ as claimed.

  To prove this formally, let $u \in V$ and $v \in A_i \setminus A_{i+1}$, with $v \in B_i(u)$.  Let $v = v_0, v_1, \dots, v_{\ell-1} = u$ be a shortest path from $v$ to $u$ with the fewest number of hops.  Thus $\ell \leq S$.  We prove by induction that $v_j$ receives a message $\langle v, d(v_{j-1}, v) \rangle$ at time at most $O(n^{1/k} j \log n)$, which clearly implies the lemma.  For the base case, in the first round $v$ sends out the message $\langle v, 0 \rangle$ to its neighbors, so $v_1$ receives the correct message at time $1 \leq n^{1/k} \log n$.  For the inductive step, consider node $v_j$.  We know by induction that $v_{j-1}$ received a message $\langle v, d(v_{j-2}, v) \rangle$ at time at most $t = O(n^{1/k} (j-1) \log n)$.  If $v_{j-1}$ already knew this distance from $v$, then it also already sent a message (or put one in the queue) informing its neighbors ($v_j$ in particular) about this.  Otherwise, $v_{j-1}$ puts a message in its outgoing queue at time $t$.  Since the nonempty queues are processed in a round-robin manner, and by Lemma~\ref{lem:bunch_size} at most $O(n^{1/k} \log n)$ queues are ever nonempty throughout the phase, $v_{j-1}$ sends a message $\langle v, d(v_{j-1} v) \rangle$ at time at most $t + O(n^{1/k} \log n) = O(n^{1/k} j \log n)$, as claimed.

  The time complexity bound immediately implies the message complexity bound, since in every round there are at most $2$ messages on each edge (one in each direction).
  \end{proof}

Lemmas~\ref{lem:TZ_small}, \ref{lem:correct}, and \ref{lem:complexity} obviously imply the following theorem:

\begin{theorem} \label{thm:main}
For any $k \geq 1$, there is a distributed sketching algorithm that takes $O(k n^{1/k} S \log n)$ rounds  and $O(k n^{1/k} S |E| \log n)$ messages, after which with high probability every node has a sketch of size at most $O(k n^{1/k} \log n)$ words (and expected size $O(k n^{1/k})$ words) that provides approximate distances with stretch $2k-1$.
\end{theorem}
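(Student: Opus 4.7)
The plan is to combine the ingredients already established: Lemma~\ref{lem:correct} handles correctness, Lemma~\ref{lem:complexity} handles the round and message complexity of each phase, Lemma~\ref{lem:TZ_small} handles the expected sketch size, Lemma~\ref{lem:bunch_size} handles the high-probability sketch size, and Lemma~\ref{lem:TZ_stretch} handles the stretch.

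First, I would run the $k$-phase protocol described in Section~\ref{sec:main}: start with phase $k-1$ (which Lemmas~\ref{lem:correct_first_phase} and~\ref{lem:complexity_first_phase} show is correct and finishes within the claimed per-phase bound), then proceed in decreasing order through phases $k-2, k-3, \dots, 0$, using Algorithm~\ref{alg:BF_modified}. By induction on the phase (Lemma~\ref{lem:correct}), every node $u$ knows $B_i(u)$ together with the exact distances $d(u,v)$ for all $v \in B_i(u)$ at the end of phase $i$. Concatenating across phases gives $u$ full knowledge of $B(u)=\bigcup_{i=0}^{k-1}B_i(u)$ along with the $p_i(u)$ values and distances, which is exactly the Thorup--Zwick label $L(u)$. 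Applying Lemma~\ref{lem:TZ_stretch} to $L(u)$ and $L(v)$ then gives a distance estimate with stretch $2k-1$.

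For the complexity bounds, Lemma~\ref{lem:complexity} gives $O(n^{1/k} S \log n)$ rounds and $O(n^{1/k} S |E| \log n)$ messages per phase with high probability, so summing over the $k$ phases yields the claimed $O(k n^{1/k} S \log n)$ rounds and $O(k n^{1/k} S |E| \log n)$ messages (again with high probability, after a union bound over the $k$ phases). For sketch size, linearity of expectation applied to Lemma~\ref{lem:TZ_small} gives expected total size $O(k n^{1/k})$. For the high-probability bound, I would apply Lemma~\ref{lem:bunch_size} and take a union bound over all $n$ nodes and all $k \le n$ levels: the probability that any $B_i(u)$ exceeds $O(n^{1/k}\log n)$ is at most $kn \cdot n^{-3} \le n^{-1}$, so with high probability $|L(u)| = O(k n^{1/k} \log n)$ for every $u$ simultaneously.

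The only subtlety (rather than a true obstacle) is making sure the various high-probability events compose correctly: the per-phase complexity bound in Lemma~\ref{lem:complexity} itself uses Lemma~\ref{lem:bunch_size} to bound the number of simultaneously active queues, and I am also using Lemma~\ref{lem:bunch_size} for the sketch size guarantee. Since both invocations fit into one global union bound over the $O(kn)$ pairs $(u,i)$, this is not an issue and the resulting failure probability remains $1/\text{poly}(n)$. Combining all of these pieces yields the statement of the theorem.
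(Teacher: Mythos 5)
Your proposal is correct and follows exactly the route the paper takes: the paper simply states that Lemmas~\ref{lem:TZ_small}, \ref{lem:correct}, and \ref{lem:complexity} (together with the stretch guarantee of Lemma~\ref{lem:TZ_stretch} and the tail bound of Lemma~\ref{lem:bunch_size}) immediately imply the theorem. Your more explicit accounting of the union bound over the $O(kn)$ pairs $(u,i)$ is a faithful elaboration of the same argument rather than a different approach.
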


\subsection{Termination Detection} \label{sec:termination}
We now show how to remove the assumption that every node knows $S$.  Note that we do not show how to satisfy this assumption, i.e.~we do not give an algorithm that computes $S$ and distributes it to all nodes.  Rather, we show how to detect when a phase has terminated, and thus when a new phase should start.  We use basically the same termination detection algorithm as the one used by Khan et al.~\cite{khan-podc}, just adapted to our context.

At the very beginning of the algorithm, even before phase $k-1$, we run a leader election algorithm to designate some arbitrary vertex $r$ as the \emph{leader}, and then build a breadth-first search (BFS) tree $T$ out of $r$ so that every node knows its parent in the tree as well as its children.  This can be done in $O(D) \leq O(S)$ rounds and $O(|E| \log n)$ messages~\cite{khan-podc}.


At the beginning of phase $i$, the leader $r$ sends a message to all nodes (along $T$) telling them when they should start phase $i$, so they all begin together.  We say that a node $u$ is \emph{complete} if either $u \not \in A_i \setminus A_{i+1}$ or every vertex in $C(u)$ knows its distance to $u$ (we will see later how to use echo messages to know when this is the case).  So initially the only complete nodes are the ones not in $A_i \setminus A_{i+1}$.  Any such node that is also a leaf in $T$ immediately sends a COMPLETE message to its parent in the tree.  Throughout the phase, when any node has heard COMPLETE messages from all of its children in $T$ and is itself complete, it sends a COMPLETE message to its parent.

Now suppose that when running phase $i$, some node $u$ gets a message $m(w) = \langle v_w, a_w \rangle$ from a neighbor $w$.  There are two reasons that this message might not result in a new message added to the send queue: if $a_w + d(u,w) \geq d(u, A_{i+1})$ ($v_w$ has not yet been shown to be in $B_i(u)$), or if $a_w + d(u,w) \geq d'(v_w)$ ($u$ already knows a shorter path to $v_w$).  Furthermore, even if $m(w)$ does result in a new message added to the send queue, it might get superseded by a new message added to the queue with an updated value of $d'(v_w)$ before the value based on $m(w)$ can be sent.  All three of these conditions can be tracked by $u$, so for each message $m$ that $u$ receives (say from neighbor $w$) it keeps track of whether or not it sends out a new message based on $m$.  If it does not (one of the two conditions failed, or it was superseded), then it sends an ECHO message back to $w$, together with a copy of the message.  If $u$ does send out a new message based on $m$, then when it has received ECHO messages for $m$ from all of its neighbors (except for $w$) it sends an ECHO message to $w$ together with a copy of $m$.

It is easy to see inductively that when a node $u$ sends a message $m$, it will also know via ECHO messages when $m$ has ceased to propagate in the network since all of its neighbors will have ECHO'd it back to $u$.  So if $u \in A_i \setminus A_{i+1}$, and thus only sends out one message that has first coordinate $u$, it will know when this message has stopped propagating, which clearly implies that every node $v \in V$ that is in $C(u)$ knows its correct distance to $u$ (as well as the fact that $u \in B_i(v)$).  At this point $u$ is complete, so once it has received COMPLETE messages from all of its children in $T$ it will send a COMPLETE message to its parent.

Once $r$ has received COMPLETE messages from all of its children (and is itself complete) the phase is over.  So $r$ starts the next phase by sending a START message to all nodes using the $T$, and the next phase begins.

It is easy to see that the ECHOs only double the number of messages and rounds, since any message sent along an edge corresponds to exactly one ECHO sent back the other way.  Electing a leader and building a BFS tree take only a negligible number of messages and rounds compared to the bounds of Theorem~\ref{thm:main}.  Each node sends only one COMPLETE message, so there are at most $O(n)$ COMPLETE messages which is tiny compared to the bound in Theorem~\ref{thm:main}, and the number of extra rounds due to COMPLETE messages is clearly only $O(D)$.  Thus even with the extra termination detection, the bounds of Theorem~\ref{thm:main} still hold.

\section{Sketches with Slack} \label{sec:slack}

Let $u,v \in V$.  We say that $v$ is \emph{$\epsilon$-far} from $u$ if
$|\{w : d(u,w) < d(u,v)\}| \geq \epsilon n$, i.e.~if $v$ is not one of
the $\epsilon n$ closest nodes to $u$.  Given a labeling $L(u)$ for
each $u \in V$, we say that it has stretch $2k-1$ and
$\epsilon$-slack if the distance that we compute for $u$ and
$v$ given $L(u)$ and $L(v)$ is at least $d(u,v)$ and at most $(2k-1)
d(u,v)$ for all $u,v \in V$ where $v$ is $\epsilon$-far from $u$.
Labelings with slack were previously studied by Chan, Dinitz, and
Gupta~\cite{CDG06} and Abraham, Bartal, Chan, Dhamdhere, Gupta, Kleinberg, Neiman, and
Slivkins~\cite{ABCDGKNS05}.  The main technique of Chan et al.~was the use of a new type
of net they called a \emph{density net}.  For each $u \in V$, let
$R(u,\epsilon) = \inf \{r : |B(u,r)| \geq \epsilon n\}$ be the minimum
distance necessary for the ball around $u$ to contain at least
$\epsilon n$ points, and let $B^{\epsilon}(u) = B(u, R(u,\epsilon))$ be this ball.  We give a definition of density net that is slightly modified from~\cite{CDG06} in order to make it easier to work with in a distributed context.

\begin{definition}
A set of vertices $N \subseteq V$ is an $\epsilon$-density net if:
\begin{enumerate}
\item For all $u \in V$, there is a vertex $v \in N$ such that $d(u,v)
  \leq R(u,\epsilon)$, and
\item $|N| \leq \frac{10}{\epsilon} \ln n$.
\end{enumerate}
\end{definition}

Chan et al.\ give a centralized algorithm that computes an
$\epsilon$-density net in polynomial time for any $\epsilon$.  Their density nets are somewhat different, in that they contain only $1/\epsilon$ nodes but the closest net node to $u$ is only guaranteed to be within $2 R(u,\epsilon)$ instead of $R(u, \epsilon)$.  We modify these values in order to give a distributed construction, and in fact with these modifications it is trivial to build density nets via random sampling.

\begin{lemma}\label{lem:density_net}
  There is a distributed algorithm that, with high probability,
  constructs an $\epsilon$-density net in constant time.
\end{lemma}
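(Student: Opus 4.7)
The plan is to construct $N$ by having every vertex independently flip a biased coin: each node $v \in V$ puts itself into $N$ with probability $p = (5 \ln n)/(\epsilon n)$. This requires no communication at all, so the whole construction takes $O(1)$ rounds trivially. All the work is in showing that the two defining properties of an $\epsilon$-density net hold with high probability.

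For the size bound, I would note that $\mathbb{E}[|N|] = np = (5 \ln n)/\epsilon$, and then apply a standard multiplicative Chernoff bound on the sum of $n$ independent indicators to get $|N| \leq (10 \ln n)/\epsilon$ with probability at least $1 - 1/n^{\Omega(1)}$. For the covering property, fix any $u \in V$. By the definition of $R(u,\epsilon)$, the set $B^{\epsilon}(u)$ contains at least $\epsilon n$ vertices, and by construction each of them independently joined $N$ with probability $p$. Hence the probability that \emph{none} of them is in $N$ is at most
\[
(1-p)^{\epsilon n} \;\leq\; e^{-p\epsilon n} \;=\; e^{-5 \ln n} \;=\; n^{-5}.
\]
Taking a union bound over the $n$ choices of $u$ shows that with probability at least $1 - n^{-4}$ every vertex $u$ has some net node $v \in N \cap B^{\epsilon}(u)$, i.e.\ with $d(u,v) \leq R(u,\epsilon)$. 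Another union bound with the size bound gives the lemma.

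I do not expect any real obstacle; the main subtlety is just choosing the sampling probability large enough to cover every ball $B^{\epsilon}(u)$ by a union bound (forcing $p = \Theta(\log n / (\epsilon n))$) while small enough that the expected size stays $O(\epsilon^{-1} \log n)$. The constant $5$ in $p$ is picked only to make the final failure probability polynomially small and the size bound match the $10/\epsilon \cdot \ln n$ in the definition; any sufficiently large constant works, and the relaxation (from Chan--Dinitz--Gupta's density nets with $O(1/\epsilon)$ points and radius $2R(u,\epsilon)$ to ours with $O(\epsilon^{-1}\log n)$ points and radius $R(u,\epsilon)$) is exactly what makes the random-sampling construction go through distributively with no communication.
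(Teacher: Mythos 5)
Your proposal is correct and takes essentially the same approach as the paper: independent sampling with probability $p=(5\ln n)/(\epsilon n)$ (hence zero communication), a multiplicative Chernoff bound for $|N|\le (10\ln n)/\epsilon$, and a union bound over vertices for the covering property. The only cosmetic difference is that you bound the coverage failure directly via $(1-p)^{\epsilon n}\le e^{-5\ln n}$, while the paper uses a lower-tail Chernoff bound after splitting on whether $p\ge 1$; your version is, if anything, slightly cleaner since the degenerate case $p\ge 1$ is handled automatically.
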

\begin{proof}
The algorithm is simple: every vertex  independently chooses to be in
$N$ with probability $\frac{5\ln n}{\epsilon n}$.  The expected size of $N$
is clearly $\frac{5\ln n}{\epsilon}$, and by a simple Chernoff bound (see e.g.~\cite{MR95}) we
have that the probability that  $|N| > \frac{10\ln n}{\epsilon}$ is at most $e^{-(20\ln n)/(3\epsilon)} \leq 1/n^{6/\epsilon}$,
so the second constraint is satisfied with high probability.

For the first constraint, that for every vertex $u$ there is some vertex $v \in B^{\epsilon}(u) \cap N$, we split into two cases depending on $\epsilon$.  If $\epsilon \leq \frac{5\ln n}{n}$, then every node has probability $1$ of being in $N$, so the condition is trivially satisfied.  Otherwise we have $\epsilon > \frac{5\ln n}{n}$, so for every $u$ we have $|B^{\epsilon}(u)| \geq 5\ln n$ and the expected size of $B^{\epsilon}(u) \cap N$ is exactly $5\ln n$.  Using a similar Chernoff bound (but from the other direction) gives us that the probability that $|B^{\epsilon}(u) \cap N|$ is less than $1$ is at most $e^{-(25 \ln n)/8} \leq 1/n^3$.  Now we can just take a union bound over all $u$ to get that the first constraint is satisfied with high probability.
\end{proof}

Using this construction, we can efficiently construct short sketches with $\epsilon$-slack:

\begin{theorem}
There is a distributed algorithm that uses at most $O(S\frac{1}{\epsilon} \log n)$ rounds and at most $O(S |E| \frac{1}{\epsilon} \log n)$ messages so that at the end of the algorithm, every node has a sketch of size at most $O(\frac{1}{\epsilon} \log n)$ words with stretch at most $3$ and $\epsilon$-slack.
\end{theorem}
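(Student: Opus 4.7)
The plan is to build a substantially simpler sketch than the full Thorup--Zwick construction: the sketch $L(u)$ will consist of the distances from $u$ to every node of an $\epsilon$-density net $N$, i.e.\ $L(u) = \{(w, d(u,w)) : w \in N\}$. By Lemma~\ref{lem:density_net} we have $|N| = O(\frac{1}{\epsilon}\log n)$ with high probability, which immediately yields the required sketch size of $O(\frac{1}{\epsilon}\log n)$ words. The distance estimator simply outputs $d'(u,v) = \min_{w \in N}\bigl(d(u,w) + d(v,w)\bigr)$.

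For the stretch guarantee, $d'(u,v) \geq d(u,v)$ is immediate from the triangle inequality. For the upper bound, suppose $v$ is $\epsilon$-far from $u$, so that $d(u,v) \geq R(u,\epsilon)$. By the first defining property of a density net, there is some $w^\star \in N$ with $d(u,w^\star) \leq R(u,\epsilon) \leq d(u,v)$. The triangle inequality then gives $d(v,w^\star) \leq d(u,v) + d(u,w^\star) \leq 2d(u,v)$, so $d'(u,v) \leq d(u,w^\star) + d(v,w^\star) \leq 3 d(u,v)$. This establishes stretch $3$ with $\epsilon$-slack.

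The distributed algorithm has two stages. First, every node independently decides whether it is in $N$ using Lemma~\ref{lem:density_net}; this takes $O(1)$ rounds. Second, we run a multi-source Bellman--Ford with source set $N$: each $w \in N$ initiates its own shortest-path wave labeled by its ID, using the same queue-based scheduler as Algorithm~\ref{alg:BF_modified} but with the cutoff $d(u,A_{i+1})$ removed (we want every node to learn its distance to \emph{every} net node, not just the closest ones). Termination detection is handled by the BFS-tree ECHO/COMPLETE mechanism of Section~\ref{sec:termination}.

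The main technical obstacle is the CONGEST bandwidth bound: since up to $|N|$ concurrent waves may contend for each edge, we must bound the delay introduced by round-robin scheduling. This is a direct adaptation of the inductive argument in Lemma~\ref{lem:complexity}. Fix $w \in N$ and $u \in V$, and let $w = v_0, v_1, \ldots, v_\ell = u$ be a minimum-hop shortest path with $\ell \leq S$. Because each intermediate node ever holds at most $|N| = O(\frac{1}{\epsilon}\log n)$ nonempty outgoing queues (one per net node), round-robin scheduling delays the propagation of the correct distance from $v_{j-1}$ to $v_j$ by an additive $O(|N|)$ rounds per hop. Induction on $j$ then shows that $u$ learns $d(u,w)$ within $O(S \cdot |N|) = O\!\left(S \tfrac{1}{\epsilon}\log n\right)$ rounds, yielding the claimed round complexity. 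The message complexity bound $O(S|E|\tfrac{1}{\epsilon}\log n)$ follows because each edge carries at most two messages per round, and the termination-detection overhead is dominated as in Section~\ref{sec:termination}.
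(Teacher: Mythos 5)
Your proposal is correct and follows essentially the same route as the paper: build the $\epsilon$-density net by sampling (Lemma~\ref{lem:density_net}), store each node's distances to all of $N$, estimate via $\min_{w\in N}(d(u,w)+d(v,w))$, and bound the stretch by $3$ using $d(u,w^\star)\leq R(u,\epsilon)\leq d(u,v)$ plus the triangle inequality. The only difference is cosmetic: the paper simply invokes the known $k$-source distributed Bellman--Ford bound of $O(|N|\cdot S)$ rounds, whereas you re-derive it via the round-robin scheduling induction of Lemma~\ref{lem:complexity}; both give the stated complexity.
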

\begin{proof}
The algorithm first uses Lemma~\ref{lem:density_net} to construct an $\epsilon$-density net $N$.  It is easy to see that the closest net
points to $u$ and $v$ are a good approximation to the distance between
$u$ and $v$.  In particular, suppose that $v$ is $\epsilon$-far from
$u$, let $u'$ be the closest node in $N$ to $u$, and let $v'$ be
the closest node in $N$ to $v$.  Then $d(u, u') \leq R(u,\epsilon)
\leq d(u,v)$ by the definition of $\epsilon$-far and
$\epsilon$-density nets, $d(v, u') \leq d(v,u) +
d(u, u') \leq 2d(u,v)$, and thus $d(u, u') + d(v, u') \leq 3 d(u,v)$.  This means that if every vertex keeps as its sketch its distance from
all nodes in $N$, we will have sketches with $\epsilon$-uniform slack
and stretch $3$ (to compute an approximation to $d(u,v)$ we can just consider use $\min_{w \in N} \{d(u,w) + d(w,v)\}$, which we can compute from the two sketches).  The size of these sketches is clearly $O(|N|) = O(\frac{1}{\epsilon} \log n)$, since for every node in $N$ we just need to store its ID and its distance.

It just remains to show how to compute these sketches efficiently.  But this is simple, since it is exactly the $k$-Source Shortest Paths problem where the sources are the nodes in $N$.  So we simply run the $k$-source version of Distributed Bellman-Ford, which gives the claimed time and message complexity bounds.
\end{proof}

We can get a different tradeoff by applying Thorup and Zwick to the
density net itself, instead of simply having every node remember its distance to all net nodes.  This is essentially what is done in the slack labeling schemes of Chan et al.~\cite{CDG06}, just with slightly different parameters and constructions (since they were able to use centralized constructions).  In particular, suppose that we manage to use Thorup-Zwick on the net, so the distances between net points are preserved up to stretch $2k-1$.  Then these sketches would have size at most $O(k |N|^{1/k} \log n) = O(k \left(\frac{1}{\epsilon} \log n\right)^{1/k} \log n)$.  For each $u \in V$, let $u' \in N$ be the closest node in the density net to $u$.  We let the sketch of $u$ be the identity of $u'$, the distance between $u$ and $u'$, and the Thorup-Zwick label of $u'$.  We call this the \emph{$(\epsilon, k)$-CDG sketch}.  Clearly this sketch has size $O(k \left(\frac{1}{\epsilon} \log n\right)^{1/k} \log n)$.  Let $u,v \in V$ such that $v$ is $\epsilon$-far from $u$.  Our estimate of the distance will be $d(u, u') + d''(u', v') + d(v', v)$, where $d''(u', u') \leq (2k-1) d(u', v')$ is the approximate distance given by the Thorup-Zwick labels.  This can obviously be computed given the sketches for $u$ and $v$.  To bound the stretch, we simply use the definition of density nets, the triangle inequality, and the fact that $v$ is $\epsilon$-far from $u$. This gives us a distance estimate $d'(u,v)$ with

\begin{align*}
d'(u,v) &= d(u, u') + d''(u', v') + d(v', v) \\
&\leq d(u,u') + (2k-1) d(u', v') + d(v',v) \\
&\leq d(u,v) + (2k-1) (d(u', u) + d(u,v) + d(v, v')) \\
& \ \ \ \ \ + 2d(u,v) \\
&\leq 3d(u,v) + (2k-1) (4d(u,v)) \\
&= (8k-1) d(u,v)
\end{align*}

This gives the basic lemma about these sketches, which was proved by~\cite{CDG06} (modulo our modifications to density nets):

\begin{lemma}[\cite{CDG06}] \label{lem:CDG_basic}
For any $\epsilon > 0$ and $1 \leq k \leq O(\log \frac{1}{\epsilon})$, with high probability the $(\epsilon, k)$-CDG sketch has size at most $O(k \left(\frac{1}{\epsilon} \log n\right)^{1/k} \log n)$ words and $(8k-1)$-stretch with $\epsilon$-slack.
\end{lemma}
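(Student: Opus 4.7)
The proof essentially consists of assembling the ingredients that have already been introduced before the lemma's statement, so my plan is to cleanly separate the size bound from the stretch bound and make precise which high-probability events we need.

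First, I would run the distributed density-net algorithm of Lemma~\ref{lem:density_net} to produce an $\epsilon$-density net $N \subseteq V$ with $|N| \leq O(\tfrac{1}{\epsilon}\log n)$ whp, such that every $u \in V$ has some $u' \in N$ with $d(u,u') \leq R(u,\epsilon)$. Each vertex $u$ then identifies its closest net vertex $u'$ together with $d(u,u')$; this can be piggy-backed on the multi-source Bellman–Ford broadcast from $N$ that is needed anyway. Next, I would run the distributed Thorup–Zwick construction of Theorem~\ref{thm:main} \emph{restricted to the net} $N$, i.e.\ using $N$ as the ground set for the hierarchy $A_0 \supseteq A_1 \supseteq \cdots \supseteq A_{k-1}$. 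Every node $u$ stores the sketch $L(u) = (u', d(u,u'), L_{TZ}(u'))$, where $L_{TZ}(u')$ is the Thorup–Zwick label of $u'$ computed inside $N$.

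For the size bound, I would invoke the whp bound $|B_i(u')| = O(|N|^{1/k} \log n)$ from Lemma~\ref{lem:bunch_size} applied to a ground set of size $|N|$, take a union bound over the $k$ levels and over all $|N|$ net nodes, and use $|N| \leq O(\tfrac{1}{\epsilon}\log n)$. This yields $|L_{TZ}(u')| \leq O(k (\tfrac{1}{\epsilon}\log n)^{1/k} \log n)$ words, and the extra $(u', d(u,u'))$ is only $O(1)$ additional words. The constraint $k \leq O(\log \tfrac{1}{\epsilon})$ is what keeps $(\tfrac{1}{\epsilon}\log n)^{1/k}$ from collapsing to a trivial bound and ensures the stated asymptotic form is correct.

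For the stretch bound, I would simply reproduce the chain of inequalities already displayed just before the lemma statement: letting $u'$ and $v'$ be the net representatives of $u$ and $v$, and using that $v$ is $\epsilon$-far from $u$ so that $R(u,\epsilon) \leq d(u,v)$, we have $d(u,u') \leq d(u,v)$ and $d(v,v') \leq d(v,u) + d(u,u') \leq 2 d(u,v)$, hence $d(u',v') \leq 4 d(u,v)$. Combining with the Thorup–Zwick guarantee $d''(u',v') \leq (2k-1) d(u',v')$ from Lemma~\ref{lem:TZ_stretch} gives
\begin{equation*}
d'(u,v) = d(u,u') + d''(u',v') + d(v',v) \leq 3 d(u,v) + (2k-1)\cdot 4 d(u,v) = (8k-1) d(u,v).
\end{equation*}
The lower bound $d'(u,v) \geq d(u,v)$ follows from the triangle inequality applied to the path $u \to u' \to v' \to v$ together with $d''(u',v') \geq d(u',v')$. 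The one subtlety worth stating explicitly is that everything must hold \emph{simultaneously} for all pairs $u,v$ that are $\epsilon$-far; this is handled by taking the union bound across the two whp events (the net size/covering property from Lemma~\ref{lem:density_net} and the bunch-size bound from Lemma~\ref{lem:bunch_size}), which only costs an extra factor inside the polynomial and so does not affect the stated whp guarantee. I do not anticipate a serious obstacle: the main conceptual step, which is showing that the density-net reduction composes correctly with the Thorup–Zwick machinery, was already carried out in~\cite{CDG06}, and my job is just to verify that our slightly modified notion of density net plugs in without changing constants in any essential way.
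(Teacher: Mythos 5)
Your proposal is correct and follows essentially the same route as the paper: the paper's justification for this lemma is exactly the text preceding its statement, namely the definition of the $(\epsilon,k)$-CDG sketch, the size bound obtained by running Thorup--Zwick on the ground set $N$ of size $O(\frac{1}{\epsilon}\log n)$, and the displayed chain of inequalities yielding $(8k-1)$-stretch via $d(u,u') \leq R(u,\epsilon) \leq d(u,v)$ and $d(v,v') \leq 2d(u,v)$. Your explicit handling of the lower bound $d'(u,v) \geq d(u,v)$ and the union bound over the high-probability events is slightly more careful than the paper's exposition but does not change the argument.
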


It remains to show how to construct $(\epsilon, k)$-CDG sketches in a distributed manner, which boils down to modifying the algorithm of Theorem~\ref{thm:main} to work with the density net rather than with the full point set (note that this is trivial in a centralized setting since we can just consider the metric completion).

\begin{lemma} \label{lem:CDG_alg}
For any $\epsilon > 0$ and $1 \leq k \leq O(\frac{1}{\epsilon})$, there is a distributed algorithm so that after $O\left(k S \left(\frac{1}{\epsilon} \log n\right)^{1/k} \log n\right)$ rounds and $O\left(k S |E|\left(\frac{1}{\epsilon} \log n\right)^{1/k} \log n\right)$ messages every node knows its $(\epsilon, k)$-CDG sketch
\end{lemma}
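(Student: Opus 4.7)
The plan is to compute the three ingredients of the $(\epsilon,k)$-CDG sketch of a node $u \in V$, namely the identity of its closest net node $u' = p_0(u) \in N$, the distance $d(u,u')$, and the Thorup--Zwick label $L(u')$ of the metric restricted to $N$.  First I would invoke Lemma~\ref{lem:density_net} to build an $\epsilon$-density net $N$ of size at most $O(\frac{1}{\epsilon}\log n)$ in $O(1)$ rounds.  Each node in $N$ then independently samples itself into levels $A_1,\dots,A_{k-1}$, where each sample in $A_i$ is included in $A_{i+1}$ with probability $|N|^{-1/k}$ (and $A_0 = N$, $A_k = \emptyset$).

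Next I would re-run the distributed algorithm of Section~\ref{sec:main} almost verbatim, except that the sources in every phase are drawn from $N$ rather than $V$.  Nodes in $V\setminus N$ play the role of passive relays: they still maintain, for each level $i$, the quantity $d(w,A_{i+1})$ (which, exactly as in Theorem~\ref{thm:main}, is known from the bunch $B_{i+1}(w)$ computed in the previous phase), and they forward a message $\langle v,a_v\rangle$ for $v \in A_i\setminus A_{i+1}$ only if $a_v + d(w,\text{sender}) < d(w, A_{i+1})$.  The key structural observation, which carries over unchanged from the proof of Lemma~\ref{lem:correct}, is that if $u \in C(v)$ then every node on the shortest $u$-$v$ path is also in $C(v)$; this holds purely from the triangle inequality and uses only that $A_{i+1}$ is some subset of $V$, so replacing $V$ by $N$ as the ground set makes no difference.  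The bunch-size analysis of Lemma~\ref{lem:bunch_size} then reruns verbatim with $n$ replaced by $|N|$ and the sampling rate by $|N|^{-1/k}$, yielding $|B_i(u)| = O(|N|^{1/k}\log n) = O((\tfrac{1}{\epsilon}\log n)^{1/k}\log n)$ whp for every $u \in V$.  Plugging this bound into Lemma~\ref{lem:complexity} (unchanged proof) gives $O(S(\tfrac{1}{\epsilon}\log n)^{1/k}\log n)$ rounds per phase and $O(kS(\tfrac{1}{\epsilon}\log n)^{1/k}\log n)$ rounds in total.  At this point every $u \in V$ knows $p_0(u)=u'$ and $d(u,u')$, and every $v \in N$ has assembled its own Thorup--Zwick label $L(v)$.

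The last step is to deliver $L(v)$ from each $v \in N$ to every node in its Voronoi cell $\{u \in V : p_0(u)=v\}$.  I would augment the Bellman--Ford updates of the previous step in the standard way so that each $u$ remembers the neighbor from which its current shortest estimate of $d(u,v)$ arrived; for $v = u'$ this yields an implicit shortest-path tree $T_v$ rooted at $v$ that spans exactly the Voronoi cell of $v$, and children learn parents via a single piggybacked notification.  Since the Voronoi cells partition $V$, the trees $\{T_v\}_{v \in N}$ are node-disjoint (hence edge-disjoint), so broadcasts from different sources never contend for the same edge.  Each $v \in N$ pipelines its $O(k|N|^{1/k}\log n)$-word label $L(v)$ down $T_v$ word by word; standard pipelined broadcast on a tree of depth at most $S$ finishes in $O(S + k|N|^{1/k}\log n)$ rounds, which is absorbed into the bound from the previous paragraph.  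Multiplying the round count by $O(|E|)$ gives the claimed message complexity.

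I expect the main obstacle to be the bookkeeping for the modified Bellman--Ford when $A_0 = N \neq V$: one must verify that the quantities $d(w,A_{i+1})$ used in the forwarding test remain correctly maintained at every $w \in V\setminus N$ (even though such $w$ is not itself a source), that the induction of Lemma~\ref{lem:correct} still closes (which it does, since the cluster-connectivity argument is purely metric), and that the bunch-size concentration is established for \emph{every} $u \in V$, not merely for $u \in N$, so that the queue-length analysis underlying Lemma~\ref{lem:complexity} still applies.  All other pieces are routine adaptations of the machinery already developed in Section~\ref{sec:main}, including termination detection via Section~\ref{sec:termination}.
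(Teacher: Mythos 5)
Your proposal is correct and its core is the same as the paper's: build the $\epsilon$-density net via Lemma~\ref{lem:density_net}, replace the hierarchy $A_0 \supseteq \cdots \supseteq A_{k-1}$ by subsets of $N$ sampled at rate $|N|^{-1/k}$, rerun Algorithm~\ref{alg:BF_modified} with all of $V$ acting as relays, and observe that Lemmas~\ref{lem:bunch_size} and~\ref{lem:correct}/\ref{lem:complexity} go through with $n^{1/k}\log n$ replaced by $O\bigl((\frac{1}{\epsilon}\log n)^{1/k}\log n\bigr)$; the concerns you flag at the end (maintaining $d(w,A_{i+1})$ at non-net relays, cluster connectivity, and bunch concentration for all $u\in V$, not just $u\in N$) are exactly the points the paper checks. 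The one genuine divergence is your final step. The paper computes $u'$ and $d(u,u')$ by a single ``super node'' Bellman--Ford from $N$ and then stops after asserting that every node knows its own Thorup--Zwick bunch over the net hierarchy; it never explains how $u$ acquires $L(u')$, even though the $(\epsilon,k)$-CDG sketch is defined as containing the label of $u'$ rather than of $u$. Your pipelined broadcast of $L(v)$ down the (node-disjoint, depth-$\le S$) Voronoi shortest-path trees is a sound way to close that gap, and its $O(S + k|N|^{1/k}\log n)$ cost is indeed absorbed into the stated bounds; alternatively one can avoid the broadcast entirely by letting $u$ keep its own bunch over the net hierarchy, which supports essentially the same stretch analysis (the base of the TZ induction starts at $d(u,p_0(u)) \le R(u,\epsilon)$ instead of $0$). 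Either way, your argument establishes the lemma as literally stated, and is if anything more complete than the paper's.
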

\begin{proof}
We first apply Lemma~\ref{lem:density_net} to construct the $\epsilon$-density net $N$.  We now want every node $u$ to know its closest net node $u'$ and its distance from $u'$.  This can be done via a single use of Distributed Bellman-Ford, where we just imagine a ``super node" consisting of all of $N$.  This takes $O(S)$ rounds and $O(S |E|)$ messages.

Now we need to run Thorup-Zwick on $N$.  But this is easy to do, since we just modify the $A_i$ sets to be subsets of $N$ instead of $V$ and change the sampling probability from $n^{-1/k}$ to $\left(\frac{10}{\epsilon} \ln n\right)^{-1/k}$.  Note that for every node $u \in V$ the bunch $B_i(u)$ is still well defined, and with high probability has size at most $O\left(\left(\frac{1}{\epsilon} \log n\right)^{1/k} \log n\right)$ (via an argument analogous to Lemma~\ref{lem:bunch_size}).  This means that we can run Algorithm~\ref{alg:BF_modified} using these new $A_i$ sets and every node will know their Thorup-Zwick sketch for these $A_i$ sets.  In particular, the nodes in $N$ will have a sketch that is exactly equal to the sketch they would have if we ran Algorithm~\ref{alg:BF_modified} on the metric completion of $N$, rather than on $G$.  It is easy to see that Lemma~\ref{lem:complexity} still applies but with $n^{1/k} \log n$ (the upper bound on the size of each $B_i(u)$) changed to $O\left(\left(\frac{1}{\epsilon} \log n\right)^{1/k} \log n\right)$, so each phase takes $O\left(S \left(\frac{1}{\epsilon} \log n\right)^{1/k} \log n\right)$ rounds and $O\left(S |E|\left(\frac{1}{\epsilon} \log n\right)^{1/k} \log n\right)$ messages.  Since there are $k$ phases, this gives the desired complexity bounds.

As before, this assumes that every node knows $S$ in order to synchronize the phases.  However, we can remove this assumption by using the termination detection algorithm of Section~\ref{sec:termination}.  This at most doubles the number of messages and rounds and adds an extra $O(|E| \log n)$ messages and $O(D)$ rounds, which is negligible.
\end{proof}

Combining Lemmas~\ref{lem:CDG_basic} and~\ref{lem:CDG_alg} gives us the following theorem.

\begin{theorem} \label{thm:slack_main}
For any $\epsilon > 0$ and $1 \leq k \leq O(\log \frac{1}{\epsilon})$, there is a distributed sketching algorithm that completes in at most $O\left(k S \left(\frac{1}{\epsilon} \log n\right)^{1/k} \log n\right)$ rounds and $O\left(k S |E|\left(\frac{1}{\epsilon} \log n\right)^{1/k} \log n\right)$ messages, after which with high probability every node has a sketch of size at most  $O(k \left(\frac{1}{\epsilon} \log n\right)^{1/k} \log n)$ words that provides approximate distances with stretch $8k-1$ and $\epsilon$-slack.
\end{theorem}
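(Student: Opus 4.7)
The plan is to obtain Theorem~\ref{thm:slack_main} by bolting together the two ingredients immediately preceding it, so the main work is a clean assembly rather than a new argument. I would first invoke Lemma~\ref{lem:density_net} to build an $\epsilon$-density net $N$ in $O(1)$ rounds. At this point every node locally knows whether it belongs to $N$, and by a single invocation of multi-source distributed Bellman-Ford (treating $N$ as one super-source) every node $u$ learns the identity of its closest net representative $u'$ and the distance $d(u,u')$; this costs $O(S)$ rounds and $O(S|E|)$ messages, which is swamped by later terms.

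The second step is to execute the Thorup--Zwick construction from Section~\ref{sec:main}, but with the base set reduced from $V$ to $N$. Concretely, I redefine $A_0 = N$ and sample each level with probability $|N|^{-1/k} = \left(\tfrac{10}{\epsilon}\ln n\right)^{-1/k}$ rather than $n^{-1/k}$; the hierarchy $A_0 \supseteq A_1 \supseteq \cdots \supseteq A_k = \emptyset$ is then formed on the $O(\tfrac{1}{\epsilon}\log n)$ net points. The bunches $B_i(u)$ are still defined for every $u \in V$ (not just $u \in N$), and the argument of Lemma~\ref{lem:bunch_size} carries over with $n$ replaced by $|N|$, giving $|B_i(u)| = O\!\left(\left(\tfrac{1}{\epsilon}\log n\right)^{1/k}\log n\right)$ whp. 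Algorithm~\ref{alg:BF_modified} is run verbatim with these new $A_i$; the round-robin queue analysis in Lemma~\ref{lem:complexity} is agnostic to whether the sources come from $V$ or a subset, so each of the $k$ phases completes in $O\!\left(S\left(\tfrac{1}{\epsilon}\log n\right)^{1/k}\log n\right)$ rounds and a factor of $|E|$ more messages. Summing over the $k$ phases gives the claimed round and message bounds, and the termination-detection wrapper of Section~\ref{sec:termination} removes the assumption that $S$ is known at only a negligible additive cost.

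Each node's final sketch is $(u',\, d(u,u'),\, L_{TZ}(u'))$, where $L_{TZ}(u')$ is the Thorup--Zwick label built on $N$. Since $u \in N$ implies $u=u'$ and all labels live on a universe of size $|N|$, Lemma~\ref{lem:TZ_small} applied to $N$ gives sketch size $O\!\left(k\left(\tfrac{1}{\epsilon}\log n\right)^{1/k}\log n\right)$ whp, matching Lemma~\ref{lem:CDG_basic}. The correctness and stretch bound of $8k-1$ with $\epsilon$-slack then come directly from the displayed computation preceding Lemma~\ref{lem:CDG_basic}: the density-net property controls $d(u,u')$ and $d(v,v')$ by $d(u,v)$ when $v$ is $\epsilon$-far from $u$, and the inner Thorup--Zwick estimate bounds $d''(u',v') \leq (2k-1)d(u',v')$.

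The step I expect to require the most care is verifying that Lemma~\ref{lem:complexity}'s round-robin analysis really does transfer unchanged: one must check that throughout the modified run each node $u \in V$ (even if $u \notin N$) holds at most $O\!\left(\left(\tfrac{1}{\epsilon}\log n\right)^{1/k}\log n\right)$ simultaneously nonempty queues, because the concentration bound from Lemma~\ref{lem:bunch_size} now needs to be restated for bunches drawn from $N$ rather than $V$. Once that is in hand, the rest is bookkeeping: multiply phase costs by $k$, add the negligible density-net, Bellman--Ford-to-$N$, and termination-detection overheads, and quote Lemma~\ref{lem:CDG_basic} for stretch and size.
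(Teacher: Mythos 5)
Your proposal is correct and follows essentially the same route as the paper: the paper proves Theorem~\ref{thm:slack_main} by combining Lemma~\ref{lem:CDG_basic} (size and stretch of the $(\epsilon,k)$-CDG sketch) with Lemma~\ref{lem:CDG_alg}, whose proof is exactly your construction --- build the density net, one super-source Bellman-Ford, then Thorup--Zwick on $N$ with sampling probability $\left(\frac{10}{\epsilon}\ln n\right)^{-1/k}$ and the bunch-size bound of Lemma~\ref{lem:bunch_size} restated for $N$. The one step you flag as needing care is precisely the step the paper also relies on (it invokes ``an argument analogous to Lemma~\ref{lem:bunch_size}''), so there is no substantive difference.
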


\subsection{Gracefully Degrading Sketches and Average Stretch} \label{sec:average}
We now show how to use Theorem~\ref{thm:slack_main} to construct sketches with bounded \emph{average} stretch, as well as bounded worst-case stretch.  Formally, suppose that we have a weighted graph $G= (V,E)$ that induces the metric $d$ and a sketching algorithm that allows us to compute distance estimates $d'$ with the property that $d'(u,v) \geq d(u,v)$ for all $u, v \in V$.  The \emph{average stretch} of the sketching algorithm is $\frac{1}{{n \choose 2}} \sum_{\{u,v\} \in {V \choose 2}} \frac{d'(u,v)}{d(u,v)}$.

In fact, we will prove a stronger statement, that there are good distributed algorithms for computing \emph{gracefully degrading} sketches.  A sketching algorithm is gracefully degrading with $f(\epsilon)$ stretch if for every $\epsilon \in (0,1)$ it is a sketch with stretch $f(\epsilon)$ and $\epsilon$-slack.  In other words, instead of specifying $\epsilon$ ahead of time (as in the slack constructions) we need a single sketch that works simultaneously for every $\epsilon$.  It is easy to see that when $f$ is $O(\log \ef)$, gracefully degrading sketches provide the desired average and worst-case stretch bounds (this was implicit in Chan et al.~\cite{CDG06}, but they only formally showed this for their specific gracefully-degrading construction, which is slightly different than ours):

\begin{lemma} \label{lem:GD_average}
Any gracefully degrading sketching algorithm with $O(\log \ef)$ stretch has stretch at most $O(\log n)$ and average stretch at most $O(1)$.
\end{lemma}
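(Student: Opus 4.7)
The plan is to handle the worst-case and average-case bounds separately, both by plugging carefully chosen values of $\epsilon$ into the gracefully degrading guarantee.

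For the worst-case bound, I would pick $\epsilon$ small enough that \emph{every} ordered pair $(u,v)$ with $u \neq v$ is automatically $\epsilon$-far. Since $v$ is $\epsilon$-far from $u$ whenever $|\{w : d(u,w) < d(u,v)\}| \geq \epsilon n$, and this set has size at most $n-1$, the choice $\epsilon = 1/n^2$ (say) forces $\epsilon n < 1$, so $\epsilon$-farness reduces to the trivial condition that the set is nonempty or empty---more concretely, any $v$ other than the unique nearest neighbor of $u$ already satisfies it, and the nearest-neighbor case can be handled by taking $\epsilon$ a touch smaller or by observing that the stretch is at worst $f(1/n^2) = O(\log n)$ anyway. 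Thus invoking the gracefully degrading property with this $\epsilon$ yields stretch $f(\epsilon) = O(\log(1/\epsilon)) = O(\log n)$ for every pair, giving the worst-case bound.

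For the average-stretch bound, the idea is to bucket pairs by their ``rank'' and apply the slack guarantee with an $\epsilon$ tailored to each bucket. Fix $u \in V$ and order the other vertices $v_2, v_3, \dots, v_n$ by nondecreasing distance from $u$. Then $v_i$ is $\epsilon$-far from $u$ precisely when $\epsilon \leq (i-1)/n$, so applying the gracefully degrading property with $\epsilon_i := (i-1)/n$ shows that the pair $(u, v_i)$ has stretch at most $f(\epsilon_i) = O(\log(n/(i-1)))$. Summing over $i$ gives
\begin{equation*}
\sum_{v \neq u} \frac{d'(u,v)}{d(u,v)} \;\leq\; \sum_{i=2}^{n} O\!\left(\log\frac{n}{i-1}\right) \;=\; O\!\left((n-1)\log n - \log((n-1)!)\right) \;=\; O(n),
\end{equation*}
where the last equality uses Stirling's approximation $\log((n-1)!) = (n-1)\log(n-1) - (n-1) + O(\log n)$. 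Summing over all $u$ and dividing by $\binom{n}{2}$ (or equivalently by $n(n-1)$ if one sums over ordered pairs) yields total average stretch $O(1)$.

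Neither step involves a serious obstacle: the worst-case claim is essentially a definition chase, and the average-case claim reduces to the elementary estimate $\sum_{j=1}^{n-1} \log(n/j) = O(n)$. The one subtlety worth flagging is aligning the precise definition of $\epsilon$-far (which uses a strict inequality on the number of strictly-closer vertices) with the rank ordering when there are ties in distance; this is handled by breaking ties consistently (exactly as is done elsewhere in the paper, e.g.\ for the Thorup--Zwick analysis) so that the $i$-th closest vertex satisfies $|\{w : d(u,w) < d(u,v_i)\}| = i-1$.
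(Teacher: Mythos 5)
Your proof is correct and follows essentially the same strategy as the paper's: take $\epsilon$ below $1/n$ so that every pair is $\epsilon$-far for the worst-case bound, and for the average-case bound group the pairs by the rank of $v$ in the distance ordering from $u$ and apply the $\epsilon$-slack guarantee with $\epsilon$ matched to that rank. The only cosmetic difference is that the paper buckets into dyadic annuli $A(u,i)$ and sums $\sum_i O(i \cdot n/2^i) = O(n)$, whereas you work at per-vertex granularity and invoke Stirling; both reduce to the same elementary estimate $\sum_{j=1}^{n-1} \log(n/j) = O(n)$.
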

\begin{proof}
The bound on the worst-case stretch is immediate by setting $\epsilon < \frac{1}{n}$.  With this setting of $\epsilon$, every two points are $\epsilon$-far from each other, and thus the stretch bound of $O(\log \ef) = O(\log n)$ holds for all pairs.

To bound the average stretch, for each $1 \leq i \leq \log n$ and vertex $u \in V$ let $A(u,i) = B^{1/2^{i-1}}(u) \cap (V \setminus B^{1/2^i}(u))$.  In other words, $A(u,i)$ is the set of points that are outside the smallest ball around $u$ containing at least $n/2^i$ points, but inside the smallest ball around $u$ containing at least $n/2^{i-1}$ points.  Note that $|A(u,i)| = n/2^i$.  Furthermore, we can bound the stretch between $u$ and any node in $A(u,i)$ by $O(i)$, since when we set $\epsilon = 1/2^i$ we have a stretch bound of $O(\log \ef) = O(i)$ for the nodes in $A(u,i)$.  Then the average stretch is at most
\begin{align*}
\frac{1}{{n \choose 2}} \sum_{\{u,v\} \in {V \choose 2}} \frac{d'(u,v)}{d(u,v)} &\leq \frac{1}{n(n-1)} \sum_{u \in V} \sum_{v \neq u} \frac{d'(u,v)}{d(u,v)} \\
&\leq \frac{1}{n(n-1)} \sum_{u \in V} \sum_{i=1}^{\log n} \sum_{v \in A(u,i)}\frac{d'(u,v)}{d(u,v)} \\
& \leq \frac{1}{n(n-1)} \sum_{u \in V} \sum_{i=1}^{\log n} O(i\cdot \frac{n}{2^i}) \\
& \leq \frac{1}{n(n-1)} \sum_{u \in V} O(n)  \\
& \leq O(1),
\end{align*}
proving the lemma.
\end{proof}

This lemma reduces the problem of constructing sketches with good average stretch to the problem of constructing gracefully degrading sketches.  But this turns out to be simple, given Theorem~\ref{thm:slack_main}.  The intuition behind gracefully degrading sketches is that they work simultaneously for every slack parameter $\epsilon$, so to create them we simply use $O(\log n)$ different sketches with slack, one for each power of $2$ between $1/n$ and $1$.

\begin{theorem} \label{thm:GD_main}
There is a distributed gracefully degrading sketching algorithm that gives sketches of size at most $O(\log^4 n)$ words with $O(\log \ef)$-stretch that completes in at most $O(S \log^4 n)$ rounds and at most $O(S |E| \log^4 n)$ messages.
\end{theorem}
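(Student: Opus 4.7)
The plan is to build a single gracefully degrading sketch by stacking $\Theta(\log n)$ independent slack sketches from Theorem~\ref{thm:slack_main}, one at each dyadic slack scale. Specifically, for each $i \in \{1,2,\ldots,\lceil \log n \rceil\}$, set $\epsilon_i = 2^{-i}$ and $k_i = \Theta(i) = \Theta(\log \tfrac{1}{\epsilon_i})$ (which lies inside the allowed range $1 \leq k_i \leq O(\log \tfrac{1}{\epsilon_i})$ of Theorem~\ref{thm:slack_main}), and run the distributed algorithm of that theorem to give each node $u$ an $(\epsilon_i, k_i)$-CDG sub-sketch. Concatenating all $\lceil \log n \rceil$ sub-sketches yields the final sketch. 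A distance query for a pair $(u,v)$ simply returns the minimum of the estimates produced by the individual sub-sketches. The sub-sketches are built sequentially, with the termination-detection machinery of Section~\ref{sec:termination} reused between phases so that $S$ need not be known.

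To verify the $O(\log \tfrac{1}{\epsilon})$ stretch for every $\epsilon$, fix any $\epsilon \in (0,1)$ and any pair $(u,v)$ with $v$ being $\epsilon$-far from $u$. Let $i^\ast$ be the smallest index with $\epsilon_{i^\ast} \leq \epsilon$, so $\epsilon/2 \leq \epsilon_{i^\ast} \leq \epsilon$. The notion of $\epsilon$-far is monotone: $v$ being $\epsilon$-far from $u$ implies $v$ is also $\epsilon_{i^\ast}$-far from $u$, since the number of closer nodes only needs to exceed $\epsilon_{i^\ast} n \leq \epsilon n$. Theorem~\ref{thm:slack_main} then guarantees that the $i^\ast$-th sub-sketch returns an estimate in $[d(u,v),\, (8 k_{i^\ast} - 1)\, d(u,v)] = [d(u,v),\, O(\log \tfrac{1}{\epsilon})\, d(u,v)]$. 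Every other sub-sketch estimate is of the form $d(u, u') + d''(u', v') + d(v', v)$ for some density-net points $u',v'$, which is always an upper bound on $d(u,v)$ via the triangle inequality (and using the no-contraction property of the inner Thorup--Zwick estimate $d''$). Hence the pointwise minimum across sub-sketches is still $\geq d(u,v)$ while being $\leq O(\log \tfrac{1}{\epsilon}) d(u,v)$, which is precisely the gracefully degrading guarantee.

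The resource bounds follow by summing the per-invocation bounds of Theorem~\ref{thm:slack_main} over $i \in \{1,\ldots,\lceil \log n \rceil\}$ with $k_i = \Theta(i)$. Each sub-sketch has size $O(k_i (\tfrac{1}{\epsilon_i} \log n)^{1/k_i} \log n) = O(\mathrm{polylog}\, n)$ words, and similarly takes $O(S \cdot \mathrm{polylog}\, n)$ rounds and $O(S|E| \cdot \mathrm{polylog}\, n)$ messages; summing over the $\lceil \log n \rceil$ scales yields the claimed $O(\log^4 n)$, $O(S \log^4 n)$, and $O(S|E| \log^4 n)$ bounds (being slightly loose on log factors, as is standard). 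The main point requiring care -- not really an obstacle but the most delicate bookkeeping -- is tracking the $(\log n)^{1/k_i}$ factor for small $i$ where $k_i$ is small, to see that the worst per-scale sub-sketch is polylogarithmic; everything else is a direct consequence of Theorem~\ref{thm:slack_main} applied at each scale.
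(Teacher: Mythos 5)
Your proposal is correct and follows essentially the same route as the paper: stack $\Theta(\log n)$ slack sketches from Theorem~\ref{thm:slack_main} at dyadic slack scales $\epsilon_i = 2^{-i}$ with $k_i = \Theta(\log\frac{1}{\epsilon_i})$, answer queries by taking the minimum estimate, round $\epsilon$ down to the nearest dyadic scale for the stretch bound, and sum the per-scale costs to get the $O(\log^4 n)$ bounds. Your explicit justification that every sub-sketch estimate is a valid overestimate (via the triangle inequality and the no-contraction property of the inner Thorup--Zwick estimate) is a detail the paper states without proof, but the argument is otherwise identical.
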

\begin{proof}
Our construction is simple: for every $1 \leq i \leq \log n$ we use Theorem~\ref{thm:slack_main} with slack $\epsilon_i = \frac{1}{2^i}$ and stretch $k = O(\log \frac{1}{\epsilon_i}) = O(\log 2^i)$.  The sketch remembered by a node is just the union of these $O(\log n)$ sketches.  Given the sketches for two different vertices $u$ and $v$ where $v$ is $\epsilon$-far from $u$, we can compute the $O(\log n)$ different distance estimates and take the minimum of them as our estimate.

To see that this is gracefully degrading with stretch $O(\log \ef)$, first note that all of the $O(\log n)$ estimates are at least as large as $d(u,v)$, so we just need to show that at least one of the estimates is at most $O(\log \ef) d(u,v)$.  Let $\epsilon_i$ be $\epsilon$ rounded down to the nearest power of $1/2$.  Then $v$ is obviously $\epsilon_i$-far from $u$, so the estimate for the $\epsilon_i$-sketch will provide an estimate of at most $O(\log \frac{1}{\epsilon_i}) d(u,v) = O(\log \ef) d(u,v)$.

 Theorem~\ref{thm:slack_main}, when specialized to the case of $k = O(\log \ef)$, completes in at most $O(S \log \ef \log^2 n)$ rounds and $O(S |E| \log \ef \log^2 n)$ messages and gives sketches of size $O(\log \ef \log^2 n)$.  Since we just run each of the $O(\log n)$ instantiations of the theorem back to back, the total number of rounds is at most $O(S \log^2 n) \sum_{i=1}^{\log n} \log 2^i = O(S \log^4 n)$, the number of messages is at most $O(S |E| \log^4 n)$, and the size is at most $O(\log^4 n)$.  Note that we can handle determination detection for each of these as usual, based on Section~\ref{sec:termination}.
\end{proof}

Together with Lemma~\ref{lem:GD_average}, this gives the following corollary:

\begin{corollary}
There is a distributed sketching algorithm that give sketches of size at most $O(\log^4 n)$ with $O(\log n)$-stretch and $O(1)$ average stretch that completes in at most $O(S \log^4 n)$ rounds and at most $O(S |E| \log^4 n)$ messages.
\end{corollary}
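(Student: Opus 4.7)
The plan is to deduce the corollary directly by plugging Theorem~\ref{thm:GD_main} into Lemma~\ref{lem:GD_average}. Theorem~\ref{thm:GD_main} already hands us a distributed gracefully degrading sketching algorithm whose sketches have size $O(\log^4 n)$ and whose construction finishes in $O(S \log^4 n)$ rounds and $O(S |E| \log^4 n)$ messages, with stretch $O(\log \frac{1}{\epsilon})$ for every slack parameter $\epsilon$. Thus the round, message, and size bounds in the corollary are inherited verbatim and require no extra argument.

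For the stretch guarantees I would invoke Lemma~\ref{lem:GD_average}, whose hypothesis is exactly that the algorithm is gracefully degrading with $O(\log \frac{1}{\epsilon})$ stretch. The worst-case bound $O(\log n)$ comes from evaluating the lemma at $\epsilon < 1/n$, where every pair is $\epsilon$-far and the stretch becomes $O(\log n)$. The average-stretch bound $O(1)$ comes from the dyadic-shell calculation inside Lemma~\ref{lem:GD_average}: for each $u$, group the other nodes by the annuli $A(u,i) = B^{1/2^{i-1}}(u) \setminus B^{1/2^i}(u)$, bound $|A(u,i)| \le n/2^i$, and use that within shell $i$ the stretch is at most $O(i)$; the resulting double sum telescopes to $O(1)$ on average. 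Since the sketches produced by Theorem~\ref{thm:GD_main} satisfy the hypothesis, both stretch bounds transfer to them.

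The only minor point to double-check, and the closest thing to an obstacle, is that the corollary states its size, round, and message bounds unconditionally, while the underlying constructions are ``with high probability.'' Theorem~\ref{thm:GD_main} is built by running $O(\log n)$ independent instantiations of Theorem~\ref{thm:slack_main}, each of which relies on high-probability events from Lemma~\ref{lem:density_net} and from the bunch-size analysis analogous to Lemma~\ref{lem:bunch_size}. Each of these events fails with probability at most $1/n^{\Omega(1)}$, so a union bound over polynomially many such events preserves the high-probability guarantee, and the ``with high probability'' qualifier can be left implicit exactly as elsewhere in the paper. No new calculation is needed, and the corollary follows by concatenating the two cited results.
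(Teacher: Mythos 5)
Your proposal is correct and matches the paper's own derivation: the corollary is obtained exactly by combining Theorem~\ref{thm:GD_main} (which supplies the size, round, and message bounds) with Lemma~\ref{lem:GD_average} (which converts the gracefully degrading $O(\log \frac{1}{\epsilon})$ stretch into the $O(\log n)$ worst-case and $O(1)$ average stretch). Your extra remark about the implicit with-high-probability qualifier is a reasonable observation but not something the paper treats separately.
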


Note that, when compared to our sketch from Theorem~\ref{thm:main} with $O(\log n)$ stretch, we pay only an extra $O(\log^2 n)$ factor in the size of the sketch as well as the number of rounds and messages, and in return we are able to achieve constant average stretch.

\section{Conclusions}
In this paper we initiated the study from a theoretical point of view of distributed algorithms for computing distance sketches in a network.  We showed that the Thorup-Zwick distance sketches~\cite{TZ05}, which provide an almost-optimal tradeoff between the size of the sketches and their accuracy, can be computed efficiently in a distributed setting, where our notion of efficiency is the standard definition of the number of rounds in the CONGEST model.  Combining this distributed algorithm with centralized techniques of Chan et al.~\cite{CDG06}, that we were also able to turn into efficient distributed algorithms, yielded a combined construction with the same worst-case stretch as the smallest version of Thorup-Zwick, but much better average stretch. This required only a polylogarithmic cost in the size of the sketches and the time necessary to construct them.  These results are a first step towards making the theoretical work on distance sketches more practical, by moving from a centralized setting to a distributed setting.  It would be interesting in the future to weaken the distributed model even further, by working in failure-prone and asynchronous settings, in the hope of eventually getting practical distance sketches with provable performance guarantees.


%

\bibliographystyle{alpha}
\bibliography{Distance-Oracles}

\newcommand{\etalchar}[1]{$^{#1}$}
\begin{thebibliography}{KKM{\etalchar{+}}08}

\bibitem[ABC{\etalchar{+}}05]{ABCDGKNS05}
Ittai Abraham, Yair Bartal, T-H.~Hubert Chan, Kedar~Dhamdhere Dhamdhere, Anupam
  Gupta, Jon Kleinberg, Ofer Neiman, and Aleksandrs Slivkins.
\newblock Metric embeddings with relaxed guarantees.
\newblock In {\em Proceedings of the 46th Annual IEEE Symposium on Foundations
  of Computer Science}, FOCS '05, pages 83--100, Washington, DC, USA, 2005.
  IEEE Computer Society.

\bibitem[AKL{\etalchar{+}}79]{AKL+79}
Romas Aleliunas, Richard~M. Karp, Richard~J. Lipton, L{\'a}szl{\'o} Lov{\'a}sz,
  and Charles Rackoff.
\newblock Random walks, universal traversal sequences, and the complexity of
  maze problems.
\newblock In {\em FOCS}, pages 218--223, 1979.

\bibitem[ALPH01]{AHLP01}
Lada~A. Adamic, Rajan~M. Lukose, Amit~R. Puniyani, and Bernardo~A. Huberman.
\newblock Search in power-law networks.
\newblock {\em Physical Review}, 64, 2001.

\bibitem[BAS04]{BAS04}
Ashwin~R. Bharambe, Mukesh Agrawal, and Srinivasan Seshan.
\newblock Mercury: supporting scalable multi-attribute range queries.
\newblock In {\em SIGCOMM}, pages 353--366, 2004.

\bibitem[BBF04]{BBF04}
Thibault Bernard, Alain Bui, and Olivier Flauzac.
\newblock Random distributed self-stabilizing structures maintenance.
\newblock In {\em ISSADS}, pages 231--240, 2004.

\bibitem[CDG06]{CDG06}
T.-H.~Hubert Chan, Michael Dinitz, and Anupam Gupta.
\newblock Spanners with slack.
\newblock In {\em Proceedings of the 14th European Symposium on Algorithms},
  pages 196--207, 2006.

\bibitem[CFI{\etalchar{+}}09]{CFIKP09}
Reuven Cohen, Pierre Fraigniaud, David Ilcinkas, Amos Korman, and David Peleg.
\newblock Labeling schemes for tree representation.
\newblock {\em Algorithmica}, 53(1):1--15, 2009.

\bibitem[Coo05]{C05}
Brian~F. Cooper.
\newblock Quickly routing searches without having to move content.
\newblock In {\em IPTPS}, pages 163--172, 2005.

\bibitem[CTW93]{CTW93}
Don Coppersmith, Prasad Tetali, and Peter Winkler.
\newblock Collisions among random walks on a graph.
\newblock {\em SIAM J. Discret. Math.}, 6(3):363--374, 1993.

\bibitem[DCKM04]{Vivaldi}
Frank Dabek, Russ Cox, Frans Kaashoek, and Robert Morris.
\newblock Vivaldi: A decentralized network coordinate system.
\newblock In {\em Proceedings of the {ACM} {SIGCOMM} '04 Conference}, Portland,
  Oregon, August 2004.

\bibitem[DSW06]{DSW06}
Shlomi Dolev, Elad Schiller, and Jennifer~L. Welch.
\newblock Random walk for self-stabilizing group communication in ad hoc
  networks.
\newblock {\em IEEE Trans. Mob. Comput.}, 5(7):893--905, 2006.
\newblock also in PODC'02.

\bibitem[GKM03]{GKM03}
Ayalvadi~J. Ganesh, Anne-Marie Kermarrec, and Laurent Massouli\'{e}.
\newblock Peer-to-peer membership management for gossip-based protocols.
\newblock {\em IEEE Trans. Comput.}, 52(2):139--149, 2003.

\bibitem[GMS05]{GMS05}
Christos Gkantsidis, Milena Mihail, and Amin Saberi.
\newblock Hybrid search schemes for unstructured peer-to-peer networks.
\newblock In {\em INFOCOM}, pages 1526--1537, 2005.

\bibitem[GPPR04]{GPPR04}
Cyril Gavoille, David Peleg, St{\'e}phane P{\'e}rennes, and Ran Raz.
\newblock Distance labeling in graphs.
\newblock {\em J. Algorithms}, 53(1):85--112, 2004.

\bibitem[IJ90]{IJ90}
Amos Israeli and Marc Jalfon.
\newblock Token management schemes and random walks yield self-stabilizing
  mutual exclusion.
\newblock In {\em PODC}, pages 119--131, 1990.

\bibitem[KKD01]{KKD01}
David Kempe, Jon~M. Kleinberg, and Alan~J. Demers.
\newblock Spatial gossip and resource location protocols.
\newblock In {\em STOC}, pages 163--172, 2001.

\bibitem[KKKP04]{KKKP04}
Michal Katz, Nir~A. Katz, Amos Korman, and David Peleg.
\newblock Labeling schemes for flow and connectivity.
\newblock {\em SIAM J. Comput.}, 34(1):23--40, 2004.

\bibitem[KKM{\etalchar{+}}08]{khan-podc}
M.~Khan, F.~Kuhn, D.~Malkhi, G.~Pandurangan, and K.~Talwar.
\newblock Efficient distributed approximation algorithms via probabilistic tree
  embeddings.
\newblock In {\em Proc. 27th ACM Symp. on Principles of Distributed Computing
  (PODC)}, 2008.

\bibitem[Kle00]{K00}
Jon~M. Kleinberg.
\newblock The small-world phenomenon: an algorithm perspective.
\newblock In {\em STOC}, pages 163--170, 2000.

\bibitem[KP08]{khan-disc}
M.~Khan and G.~Pandurangan.
\newblock A fast distributed approximation algorithm for minimum spanning
  trees.
\newblock {\em Distributed Computing}, 20:391--402, 2008.

\bibitem[KR04]{KR04}
David~R. Karger and Matthias Ruhl.
\newblock Simple efficient load balancing algorithms for peer-to-peer systems.
\newblock In {\em SPAA}, pages 36--43, 2004.

\bibitem[LCC{\etalchar{+}}02]{LCCLS02}
Qin Lv, Pei Cao, Edith Cohen, Kai Li, and Scott Shenker.
\newblock Search and replication in unstructured peer-to-peer networks.
\newblock In {\em ICS}, pages 84--95, 2002.

\bibitem[LKRG03]{LKRG03}
Dmitri Loguinov, Anuj Kumar, Vivek Rai, and Sai Ganesh.
\newblock Graph-theoretic analysis of structured peer-to-peer systems: routing
  distances and fault resilience.
\newblock In {\em SIGCOMM}, pages 395--406, 2003.

\bibitem[LS03]{LawS03}
Ching Law and Kai-Yeung Siu.
\newblock Distributed construction of random expander networks.
\newblock In {\em INFOCOM}, 2003.

\bibitem[MG07]{MG07}
Rams{\'e}s Morales and Indranil Gupta.
\newblock Avmon: Optimal and scalable discovery of consistent availability
  monitoring overlays for distributed systems.
\newblock In {\em ICDCS}, page~55, 2007.

\bibitem[MR95]{MR95}
Rajeev Motwani and Prabhakar Raghavan.
\newblock {\em Randomized algorithms}.
\newblock Cambridge University Press, New York, NY, USA, 1995.

\bibitem[Pel00]{peleg}
David Peleg.
\newblock {\em Distributed computing: a locality-sensitive approach}.
\newblock Society for Industrial and Applied Mathematics, Philadelphia, PA,
  USA, 2000.

\bibitem[PK09]{PK09}
Gopal Pandurangan and Maleq Khan.
\newblock Theory of communication networks.
\newblock In {\em Algorithms and Theory of Computation Handbook, Second
  Edition}. CRC Press, 2009.

\bibitem[Sli05]{S05b}
Aleksandrs Slivkins.
\newblock Distance estimation and object location via rings of neighbors.
\newblock In {\em Proceedings of the twenty-fourth annual ACM symposium on
  Principles of distributed computing}, PODC '05, pages 41--50, New York, NY,
  USA, 2005. ACM.

\bibitem[Sli07]{S07}
Aleksandrs Slivkins.
\newblock Towards fast decentralized construction of locality-aware overlay
  networks.
\newblock In {\em Proceedings of the twenty-sixth annual ACM symposium on
  Principles of distributed computing}, PODC '07, pages 89--98, New York, NY,
  USA, 2007. ACM.

\bibitem[TZ05]{TZ05}
Mikkel Thorup and Uri Zwick.
\newblock Approximate distance oracles.
\newblock {\em J. ACM}, 52(1):1--24, 2005.

\bibitem[WSS05]{WSS05}
Bernard Wong, Aleksandrs Slivkins, and Emin~G\"{u}n Sirer.
\newblock Meridian: a lightweight network location service without virtual
  coordinates.
\newblock In {\em Proceedings of the 2005 conference on Applications,
  technologies, architectures, and protocols for computer communications},
  SIGCOMM '05, pages 85--96, New York, NY, USA, 2005. ACM.

\bibitem[ZS06]{ZS06}
Ming Zhong and Kai Shen.
\newblock Random walk based node sampling in self-organizing networks.
\newblock {\em Operating Systems Review}, 40(3):49--55, 2006.

\bibitem[ZSS05]{ZSS05}
Ming Zhong, Kai Shen, and Joel~I. Seiferas.
\newblock Non-uniform random membership management in peer-to-peer networks.
\newblock In {\em INFOCOM}, pages 1151--1161, 2005.

\end{thebibliography}

\end{document}